\let\mathbb=\mathds
\DeclareMathOperator{\Tr}{Tr}
\DeclareMathOperator{\e}{\mathrm{e}}
\newcommand{\be}{{\mathbf e}}
\newcommand{\tr}{\operatorname{Tr}}
\newcommand{\ten}{\otimes}
\newcommand{\pl}{\hspace{.1cm}}
\def\0{{\mathbf{0}}}
\def\1{{\mathbf{1}}}
\def\2{{\mathbf{2}}}
\def\3{{\mathbf{3}}}
\def\4{{\mathbf{4}}}
\def\5{{\mathbf{5}}}
\def\6{{\mathbf{6}}}
\def\7{{\mathbf{7}}}
\def\8{{\mathbf{8}}}
\def\9{{\mathbf{9}}}
\def\be{\begin{equation}}
\def\ee{\end{equation}}
\def\bea{\begin{eqnarray}}
\def\eea{\end{eqnarray}}
\def\eps{\varepsilon}
\newcommand{\id}{\operatorname{id}}
\theoremstyle{plain}
\newtheorem{theo}{Theorem} 
\newtheorem{prop}[theo]{Proposition} 
\newtheorem{lemm}[theo]{Lemma} 
\theoremstyle{definition}
\newtheorem{defn}[theo]{Definition} 
\theoremstyle{remark}
\newtheorem{remark}{Remark}[section]
\numberwithin{equation}{section}
\newcommand{\opnorm}{\@ifstar\@opnorms\@opnorm}
\newcommand{\@opnorms}[1]{%
	$\left|\mkern-1.5mu\left|\mkern-1.5mu\left|
	#1
	\right|\mkern-1.5mu\right|\mkern-1.5mu\right|$
}
\newcommand{\@opnorm}[2][]{%
	\mathopen{#1|\mkern-1.5mu#1|\mkern-1.5mu#1|}
	#2
	\mathclose{#1|\mkern-1.5mu#1|\mkern-1.5mu#1|}
}
\begin{document}

\let\origmaketitle\maketitle
\def\maketitle{
	\begingroup
	\def\uppercasenonmath##1{} 
	\let\MakeUppercase\relax 
	\origmaketitle
	\endgroup
}

\title{\bfseries \Large{ 
Optimal Second-Order Rates for  Quantum Information Decoupling
		}}

\author{ \normalsize \textsc{Yu-Chen Shen$^{1}$, Li Gao$^{6}$, and Hao-Chung Cheng$^{1\textrm{--}5}$}}
\address{\small  	
$^{1}$Department of Electrical Engineering and Graduate Institute of Communication Engineering,\\ National Taiwan University, Taipei 106, Taiwan (R.O.C.)\\
$^{2}$Department of Mathematics, National Taiwan University\\
$^{3}$Center for Quantum Science and Engineering,  National Taiwan University\\
$^4$Physics Division, National Center for Theoretical Sciences, Taipei 10617, Taiwan (R.O.C.)\\
$^{5}$Hon Hai (Foxconn) Quantum Computing Center, New Taipei City 236, Taiwan\\
$^6$School of Mathematics and Statistics, Wuhan University, Hubei Province 430072, P.~R.~China\\
}


\email{\href{mailto:ycshen089@gmail.com}{ycshen089@gmail.com}}
\email{\href{mailto:gaolimath@gmail.com}{gaolimath@gmail.com}}
\email{\href{mailto:haochung.ch@gmail.com}{haochung.ch@gmail.com}}

\date{\today}
\begin{abstract}
In this paper, we consider the standard quantum information decoupling, in which Alice aims to decouple her system from the environment by local operations and discarding some of her systems. 
To achieve an \(\varepsilon\)-decoupling with trace distance as the error criterion, we establish a near-optimal one-shot characterization for the largest dimension of the remainder system in terms of the conditional \((1-\varepsilon)\)-hypothesis-testing entropy.
When the underlying system is independent and identically prepared, our result leads to the matched second-order rate as well as the matched moderate deviation rate.
As an application, we find an achievability bound in entanglement distillation protocol, where the objective is for Alice and Bob to transform their quantum state to maximally entangled state with largest possible dimension using only local operations and one-way classical communications.

\end{abstract}
\maketitle

\section{Introduction} \label{sec:introduction}
Given a bipartite quantum system $A$ and $E$, quantum decoupling aims to un-correlate the system $A$ from the other system $E$ via suitable
processes operated on $A$. It has considerable applications in quantum information theory and quantum physics.
For information theory, it has been utilized in quantum state merging \cite{HOW07}, quantum reverse Shannon theorem \cite{BCR11, BDH+09}, and characterizing correlation between systems \cite{Bus09, GPW05}. As for quantum physics, quantum decoupling has been related to the process to thermal equilibrium in thermal dynamics \cite{LPSW09}, black hole radiation \cite{HP07}, and certain topics in solid state physics \cite{BH13}.

In this paper, we consider a standard quantum decoupling protocol: for a bipartite quantum state $\rho_{AE}$ held by Alice and Eve, Alice applies a unitary map ${U}_A$ randomly chosen from the unitary group $\mathds{U}({A})$ on Hilbert space ${A}$, followed by a partial trace $\mathcal{T}_{A\to C} = \Tr_{A_1}$ (suppose Alice's system $A=A_1 C$ consists of two subsystems $A_1$ and $C$), to decouple her system from Eve's. To ensure that the processed state is almost indistinguishable from the ideal decoupled state, we adopt \emph{trace distance} as an error criterion to measure the closeness of these two states. That is, for any bipartite state $\rho_{AE}$, we define the average decoupling error
    \begin{align}
        \Delta(A|E)_{\rho}
        &:=\frac{1}{2}\mathds{E}_U\left\|\mathcal{T}_{A\to C}(U_A\rho_{AE}U_A^{\dagger}) - \omega_{C}\otimes \rho_E\right\|_1 \notag\\
        &= \frac{1}{2}\int_{\mathds{U}({A})} \left\|\mathcal{T}_{A\to C}(U_A(\rho_{AE}))-\omega_{C}\otimes\rho_E\right\|_1 \,\mathrm{d} U.\notag
    \end{align}
where $\mathrm{d} U$ is respect to the Haar measure, and $\omega_{A'C}$ denotes the Choi state of $\mathcal{T}_{A\to C}$ ($\omega_C = \frac{\mathds{1}_C}{|{C}|}$ in our case). 
Note that the Haar integral can be replaced by any unitary $2$-design. We are concerned about the \emph{maximal remainder dimension} $|{C}|$ of the output Hilbert space ${C}$ for an  $\varepsilon$-decoupling \cite[\S 7]{Tom16}:
\begin{align} 
    \ell^{\varepsilon}(A{\,|\,}E)_{\rho} := \sup\left\{   \left|{C}\right| \in \mathds{N} : \Delta(A{\,|\,}E)_{\rho} \leq \varepsilon \right\}.\label{def:maxlq}
\end{align}

Due to the importance of quantum decoupling, there have been extensive studies on the \emph{one-shot characterizations} for $\ell^{\varepsilon}(A{\,|\,}E)_{\rho}$.
Tight one-shot and asymptotic characterizations have been obtained when \emph{purified distance} \cite{TCR10} is used as the error criterion \cite{TH13, MBD+17, KL21, LY21a}. 
As for trace distance in the same manner with this work, Dupuis \textit{et al.}~\cite{Dup14, MBD+17} proved a one-shot achievability (i.e.~lower) bound to $\ell^{\varepsilon}(A{\,|\,}E)_{\rho}$ and also a converse (i.e.~upper) bound was in \cite[Theorem~4.1]{Dup14}. 
 Nevertheless,  those bounds {disagree} in the second-order term. This suggests that the existing one-shot bounds are not tight enough in the asymptotic limit.
Hence, {\emph{tight}}  one-shot characterizations of $\ell^{\varepsilon}$ in trace distance have been open for a decade.

We resolve this open problem of tight one-shot characterizations of decoupling in trace distance error. Our main result is a lower and upper bound of one-shot characterization of maximal remainder dimension:
\begin{align}
    \log \ell^{\varepsilon}(A{\,|\,}E)_{\rho} \approx \frac{1}{2}\left(\log|{A}|+H_\textnormal{h}^{1-\varepsilon \pm \delta}(A{\,|\,}E)_{\rho}  \right),\notag
\end{align}
where ``$\approx$'' means equality up to some logarithmic additive terms, and $\delta$ is a parameter that can be chosen for optimization, and $H_{\mathrm{h}}^{\varepsilon}$ is $\varepsilon$-\emph{hypothesis testing entropy} \cite{TH13} defined later in \eqref{eq:one-shot_entropy}. Moreover, in i.i.d. scenario, our result leads to optimal second-order asymptotic rate. Namely,
\begin{align}
\log \ell^{\varepsilon}(A^n|E^n)_{\rho^{\otimes n}} &= n\left( \frac{1}{2}(\log |{A}|+H(A|E)_{\rho})\right) + \frac{1}{2}\sqrt{nV(A|E)_{\rho}}\, \Phi^{-1}(\varepsilon) + O(\log n),\notag
\end{align}
where $H(A|E)_{\rho}$ is the conditional entropy and $V(A|E)_{\rho}$ is the conditional information variance defined later in \eqref{eq:conditional entropy}, and $\Phi^{-1}$ is the inverse of cumulative normal distribution \eqref{eq:inverse_Phi}.
Moreover, our one-shot characterization applies to moderate deviation regime, where we derive the optimal rate of the maximal remainder dimension $\ell^{\varepsilon}$ when $\varepsilon$ approaches zero moderately fast as $n\to \infty$.

As an application, our result of quantum decoupling leads to an achievability bound of \emph{entanglement distillation}. Suppose Alice and Bob hold a quantum state $\rho_{AB}$. The goal of entanglement distillation is for Alice and Bob to transform $\rho_{AB}$ into a maximally entangled state by performing a set of operations. In our work we consider Alice and Bob only apply local operations and one-way classical communications denoted as $\Pi_{1-\text{LOCC}}$. We consider the accuracy criterion as the fidelity between the resulted state and the maximally entangled state
\begin{align}
F_{\textnormal{ED}}(\rho,m) := \sup_{\Pi_{1-\text{LOCC}}} F\left(\Phi^{m},\Pi_{1-\text{LOCC}}(\rho_{AB})\right),
\end{align}
where $\Phi^m$ denotes the maximally entangled state with dimension $m$. Similar to quantum decoupling, in entanglement distillation we are concerned about the $\varepsilon$-error maximal distillable entanglement, defined as follows.
\begin{align}
E_{\textnormal{ED}}^{\varepsilon}(\rho) := \sup\left\{M \in \mathds{N}: m\geq M \wedge F_{\textnormal{ED}}(\rho,m) \geq 1-\varepsilon\right\}.\notag
\end{align}

Plenty of studies have been working on entanglement distillation with different sets of operations. Various approaches \cite{R99, R01, HHH00, CW04, LDS18, WD16} have been developed to evaluate the maximal distillable entanglement. Moreover, there are also papers working on bounding asymptotic expansion of maximal distillable entanglement \cite{FWTD19, DL15}. 
In this paper, we give an lower bound of maximal distillable entanglement $E_{\textnormal{ED}}^{\varepsilon}$.
Applying the result of our maximal remainder dimension, the following holds for one-shot maximal distillable entanglement,
\begin{align}
			\log E_{\textnormal{ED}}^{\varepsilon}(\rho) \gtrsim H_\textnormal{h}^{1-\varepsilon - \delta}(A{\,|\,}E)_{\rho}.\notag
\end{align}
where $\rho_{ABE}$ is the purified state of $\rho_{AB}$.
Here ``$\gtrsim$'' means inequality up to some logarithmic additive terms.
This lower bound leads to the following second-order asymptotic bound in i.i.d. scenario,
\begin{align}
\log E_{\textnormal{ED}}^{\varepsilon}(\rho^{\otimes n}) \geq nH(A|E)_{\rho} + \sqrt{nV(A|E)_{\rho}}\Phi^{-1}(\varepsilon)  + O(\log n).\notag
\end{align}

The paper is structured as follows. In the rest of this section we compare our results with existing literature. Section \ref{sec:notation} reviews the necessary background on entropic quantities and relations between divergence. In Section~\ref{sec:sc},  we present our main result: a near optimal one-shot characterization of maximal remainder dimension, as well as the asymptotic bound which our one-shot bound leads to. In Section~\ref{sec:lower} and Section~\ref{sec:converse} we show the proof of achievability bound and the converse bound of our one-shot characterization, respectively. In Section~\ref{sec:distill} we show our application of quantum decoupling to entanglement distillation. We conclude our paper in Section~\ref{sec:conclusion}.

\subsection{Comparison with existing results}
For quantum decoupling, in \cite[Theorem~3.1]{Dup14}, Dupuis et al. found a one-shot direct bound as follows:
\begin{align}
\Delta(A|E)_{\rho}
\leq \frac{1}{2}\e^{-\frac{1}{2}H^{\varepsilon}_{\min}(A|E)_{\rho}+\frac{1}{2}\log\frac{|{C}|^2}{|{A}|}}+3\varepsilon,\notag
\end{align}
where $H^{\varepsilon}_{\min}(A|E)_{\rho}$ is the \emph{smooth conditional min-entropy} of $A$ given $E$.
Such result leads to the lower bound of $\ell^{\varepsilon}(A|E)_{\rho}$ \cite{CHR18}:
\begin{align}
\log\ell^{\varepsilon}(A|E)_{\rho} \gtrsim \frac{1}{2}(\log|{A}| + H^{\frac{\varepsilon}{3} }_{\min}(A|E)_{\rho}).\notag
\end{align}
By second-order expansion of $H^{\varepsilon}_{\min}$ \cite{TH13}, this one-shot bound implies to the following second-order asymptotic bound,
\begin{align}
\log \ell^{\varepsilon}(A^n|E^n)
&\geq n\left( \frac{1}{2}(\log |{A}|+H(A|E)_{\rho})\right) + \frac{1}{2}\sqrt{nV(A|E)_{\rho}}\Phi^{-1}({\varepsilon^2}/{9}) + O(\log n).\notag
\end{align}
On the other hand, applying \cite[Corollary~4.2]{Dup14} with $\varepsilon''$ being close to $0$, one have
\begin{align}
\log\ell^{\varepsilon}(A|E)_{\rho}\leq \frac{1}{2}\left(H^{2\sqrt{2\varepsilon}+2\sqrt{\varepsilon'}}_{\min}(A|E)_{\rho}+\log|{A}|\right) + O(\log\varepsilon').\notag
\end{align}
Similarly, above bound implies the following second-order asymptotic bound
\begin{align}
\log \ell^{\varepsilon}(A^n|E^n)\leq n\left( \frac{1}{2}(\log |{A}|+H(A|E)_{\rho})\right) + \frac{1}{2}\sqrt{nV(A|E)_{\rho}}\Phi^{-1}(8\varepsilon) + O(\log n).\notag
\end{align}

Note that the second-order term of the upper bound and lower bound do not match. This is because the smoothing parameters $\frac{\eps}{3}$ and $\eps''=2\sqrt{2\varepsilon}+2\sqrt{\varepsilon'}$ are quantitatively different.
In some circumstances, these \textit{unmatched} parameters cause significant gaps between the upper and lower bounds \textit{even in the one-shot setting}, hence never achieve matched second-order rate in the i.i.d.~asymptotic limit. For the same reason, a tight characterization in purified distance \cite{TH13, MBD+17, KL21, LY21a} does not help here because converting purified distance error criterion to trace distance using Uhlmann's theorem also leads to unmatched error parameters.

In comparison, our result (Theorem~\ref{theo:one-shot_second-order_PA}) is optimal in second-order expansion with trace distance as security standard (Proposition~\ref{qdecoupleiid}). This suggests that our one-shot bound, without going through smooth entropies, is tighter in both directions. 

\section{Notation and Information-Theoretic Quantities} \label{sec:notation}
For an integer $M\in\mathds{N}$, we denote $[M]:= \{1,\ldots, M\}$. We use `$\wedge$' to denote the minimum between two scalars or the conjunction `and' between two statements. 
	We denote $\mathds{1}_{A}$ as the indicator function for a condition $A$.
	The density operators are positive semi-definite operators with unit trace. For Hermitian operators $P$ and $Q$, we denote $\{P\leq Q\}$ as the projection operator on the expansion of eigenvectors with non-negative eigenvalues of the matrix $Q-P$.
	For a trace-class operator $H$, the trace class norm (also called Schatten-$1$ norm) is defined by
		$\|H\|_1 := \Tr\left[ \sqrt{ H^\dagger H } \right]$. For density operators $\rho$ and $\sigma$, the fidelity is defined as $F(\rho,\sigma):= \|\sqrt{\rho}\sqrt{\sigma}\|_1$.
	A quantum operation is mathematically described as a completely positive and trace preserving (CPTP) map on density operators.

     For two density operators $\rho$ and $\sigma$, we define the $\varepsilon$-\emph{information spectrum divergence} \cite{HN03, NH07} as
	\begin{align}
		D_\text{s}^{\varepsilon}(\rho \parallel \sigma) &:= \sup_{c\in\mathds{R}} \left\{ \log c : \Tr\left[ \rho \left\{ \rho \leq c \sigma \right\} \right] \leq \varepsilon  \right\}, \label{eq:Ds}
	\end{align}
	and the $\varepsilon$-\emph{hypothesis testing divergence} \cite{TH13, WR13, Li14} as
	\begin{align} 
		D_\text{h}^\varepsilon(\rho \,\|\, \sigma) := \sup_{0\leq T\leq \mathds{1} } \left\{ -\log \Tr[\sigma T] : \Tr[\rho T] \geq 1 - \varepsilon \right\}.\notag
	\end{align}

 The \emph{quantum relative entropy} \cite{Ume62, HP91} and \emph{quantum relative entropy variance} \cite{TH13,Li14} for density operator $\rho$ and positive definite operator $\sigma$ are defined as
	\begin{align}
		D(\rho \,\|\,\sigma) &:= \Tr\left[ \rho \left( \log \rho - \log \sigma \right) \right];\notag\\
		V(\rho \,\|\,\sigma) &:= \Tr\left[ \rho \left( \log \rho - \log \sigma \right)^2 \right] - \left( D(\rho \,\|\,\sigma) \right)^2.\notag
	\end{align}

For a bipartite quantum system $AB$, we recall the \emph{conditional $\varepsilon$-hypothesis testing entropy} \cite{TH13} as
	\begin{align}  
		H_\text{h}^\varepsilon(A{\,|\,}B)_\rho := -D_\text{h}^\varepsilon\left(\rho_{AB} \,\|\,\mathds{1}_A\otimes \rho_B\right).\label{eq:one-shot_entropy}
	\end{align}
	We note that $H_\text{h}^\varepsilon(A{\,|\,}B)_\rho$ can be formulated as a semi-definite optimization problem \cite{TH13}, \cite[\S 1]{Wat18}.
	The \emph{quantum conditional entropy} and the \emph{quantum conditional information variance} of $\rho_{AB}$ are defined, respectively as
	\begin{align} \label{eq:conditional entropy}
		\begin{split}
		H(A{\,|\,} B)_\rho &:= -D(\rho_{AB} \,\|\,\mathds{1}_A\otimes \rho_B)\,; \\
		V(A{\,|\,} B)_\rho &:= V(\rho_{AB} \,\|\,\mathds{1}_A\otimes \rho_B)\,.
		\end{split}
	\end{align}

For any self-adjoint operator $H$ with eigenvalue decomposition $H = \sum_i \lambda_i |e_i\rangle \langle e_i|$,
	we define the set $\textnormal{\texttt{spec}}(H):= \{\lambda_i\}_i$ to be the set of eigenvalues of $H$, and $|\textnormal{\texttt{spec}}(H)|$ to be the number of distinct eigenvalues of $H$.
	We define the \emph{pinching map} with respect to $H$ as
	\begin{align}
		\mathscr{P}_H[L] : L\mapsto \sum_{i=1} e_i L e_i\, \label{eq:pinching}
	\end{align}
	where $e_i$ is the spectrum projection onto $i$th distinct eigenvalues.

For positive semi-definite operator $K$ and positive operator $L$, we use the following short notation for the \emph{noncommutative quotient}.
\begin{align} 
    \frac{K}{L}:= L^{-\frac{1}{2}} K L^{-\frac{1}{2}}\,.\notag
\end{align} 
For positive semi-definite operator $\rho$ and  positive definite $\sigma$, we define the \emph{collision divergence}\footnote{Note that for unnormalized $\rho$, the collision divergence is usually defined as $\log \Tr[ ( \sigma^{-\frac{1}{4}} \rho  \sigma^{-\frac{1}{4}} )^2 ] - \log \Tr[\rho]$. However, we do not use such a definition for notational convenience. 
		Indeed, our derivations will rely on the joint convexity of $\exp D_2^*$ (see Lemma~\ref{lemm:joint_convexity}) which holds for positive semi-definite operator $\rho$ as well.} \cite{Ren05} as
	\begin{align} 
		D^*_2(\rho \,\|\, \sigma) :=  \log \Tr\left[ \left( \sigma^{-\frac{1}{4}} \rho  \sigma^{-\frac{1}{4}} \right)^2 \right].\notag
	\end{align}
For any positive operator $\rho_{AB}$, the \emph{conditional min-entropy} of $A$ given $B$ is defined as
\begin{align}
    H_{\min}(A|B)_{\rho} := -\log \min \left\{\Tr[\sigma_{B}]: \sigma_B \geq 0\ ,\  \rho_{AB} \leq \mathds{1}_A\otimes \sigma_B \right\}.\notag
\end{align}
	
	\begin{lemm}[A Cauchy--Schwartz type inequality {\cite[Exercise 6.9]{Hay17}}] \label{lemm:Cauchy--Swartz}
		For any operator $X$ and a positive definite density operator $\rho$, it holds that
		\begin{align}
			\left\|X\right\|_1 \leq \sqrt{ \Tr\left[X^\dagger X \rho^{-1} \right] }.\notag
		\end{align}
	\end{lemm}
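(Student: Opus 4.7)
The plan is to deduce the inequality from Hölder's inequality for the Schatten $p$-norms, exploiting the factorization $X = (X\rho^{-1/2})\,\rho^{1/2}$ that is available because $\rho$ is positive definite (so $\rho^{-1/2}$ exists).

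First I would write $X = A B$ with $A := X\rho^{-1/2}$ and $B := \rho^{1/2}$, and invoke the Schatten Hölder inequality with conjugate exponents $(p,q) = (2,2)$, namely $\|AB\|_1 \le \|A\|_2 \, \|B\|_2$. This gives
\begin{align}
    \|X\|_1 \;\le\; \|X\rho^{-1/2}\|_2 \cdot \|\rho^{1/2}\|_2. \notag
\end{align}
Next I would evaluate the two Hilbert--Schmidt factors individually. For the second, since $\rho$ is a density operator,
\begin{align}
    \|\rho^{1/2}\|_2^{2} \;=\; \Tr\!\left[(\rho^{1/2})^\dagger \rho^{1/2}\right] \;=\; \Tr[\rho] \;=\; 1. \notag
\end{align}
For the first, using the cyclicity of the trace,
\begin{align}
    \|X\rho^{-1/2}\|_2^{2} \;=\; \Tr\!\left[\rho^{-1/2} X^\dagger X \rho^{-1/2}\right] \;=\; \Tr\!\left[X^\dagger X\, \rho^{-1}\right]. \notag
\end{align}
Combining these two identities with the Hölder bound yields the claimed inequality $\|X\|_1 \le \sqrt{\Tr[X^\dagger X \rho^{-1}]}$.

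There is no real obstacle here beyond justifying the existence of $\rho^{-1/2}$ (which is given since $\rho$ is positive definite) and invoking the standard Schatten Hölder inequality $\|AB\|_1 \le \|A\|_2 \|B\|_2$, which is itself an instance of the Cauchy--Schwarz inequality for the Hilbert--Schmidt inner product $\langle A^\dagger, B\rangle_{\mathrm{HS}} = \Tr[AB]$. If one prefers not to cite Hölder, the same bound follows directly by starting from the polar decomposition $X = U|X|$, writing $\|X\|_1 = \Tr[U^\dagger X] = \Tr\!\bigl[(U^\dagger X \rho^{-1/2})\,\rho^{1/2}\bigr]$, and applying Cauchy--Schwarz for the trace inner product together with $\Tr[\rho]=1$ and $U^\dagger U \le \mathds{1}$.
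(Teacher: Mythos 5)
Your proof is correct. Note that the paper does not actually prove this lemma---it is quoted from \cite[Exercise 6.9]{Hay17} without argument---so there is no in-paper proof to compare against; your factorization $X=(X\rho^{-1/2})\rho^{1/2}$ followed by the Schatten--H\"older (equivalently, Hilbert--Schmidt Cauchy--Schwarz) bound $\|AB\|_1\le\|A\|_2\|B\|_2$ and the normalization $\Tr[\rho]=1$ is exactly the standard derivation of this inequality, and both the main route and the polar-decomposition variant you sketch are sound.
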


\begin{lemm}[{Pinching inequality} {\cite{Hay02}}] \label{lemm:pinching}
		For every $d$-dimensional self-adjoint operator $H$ and positive semi-definite operator $L$,
		\begin{align} 
			\mathscr{P}_H[L] \geq \frac{1}{|\textnormal{\texttt{spec}}(H)|} L\,.\notag
		\end{align}
		Moreover, it holds that for every $n\in\mathds{N}$, 
		\begin{align}
			\left|\textnormal{\texttt{spec}}\left(H^{\otimes n}\right)\right| \leq (n+1)^{d-1}\,.\notag
		\end{align}
	\end{lemm}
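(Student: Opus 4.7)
The plan is to prove the two parts separately; both rely on short and standard combinatorial / representation-theoretic arguments. For the pinching inequality I would realise the pinching map as a uniform average over a finite abelian group of unitaries commuting with $H$. Let $N := |\textnormal{\texttt{spec}}(H)|$ and let $e_1,\ldots,e_N$ denote the spectral projections of $H$ onto its distinct eigenspaces, so that $\sum_j e_j = \mathds{1}$ and $e_j e_\ell = \mathds{1}_{j=\ell}\, e_j$. Setting $\omega := \e^{2\pi\mathrm{i}/N}$, define the diagonal (hence $H$-commuting) unitaries $U_k := \sum_{j=1}^N \omega^{jk}\, e_j$ for $k=0,\ldots,N-1$. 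Using the character orthogonality relation $\tfrac{1}{N}\sum_{k=0}^{N-1}\omega^{(j-\ell)k} = \mathds{1}_{j=\ell}$, a direct expansion yields $\tfrac{1}{N}\sum_{k=0}^{N-1} U_k L U_k^{\dagger} = \sum_{j=1}^N e_j L e_j = \mathscr{P}_H[L]$. Each summand $U_k L U_k^{\dagger}$ is positive semi-definite since $L \geq 0$, and the $k=0$ term equals $L$ itself; dropping the remaining non-negative summands then gives $\mathscr{P}_H[L] \geq \tfrac{1}{N}\, L$, which is the first claim.

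For the spectrum bound I would use the method of types. Let $\lambda_1,\ldots,\lambda_{d'}$ be the distinct eigenvalues of $H$, with $d' \leq d$. The eigenvalues of $H^{\otimes n}$ have the form $\prod_{j=1}^n \lambda_{i_j}$ for sequences $(i_1,\ldots,i_n)\in [d']^n$, and each such product depends only on the type of the sequence, i.e.\ on the vector of non-negative multiplicities $(n_1,\ldots,n_{d'})$ with $\sum_{r=1}^{d'} n_r = n$. Therefore $|\textnormal{\texttt{spec}}(H^{\otimes n})|$ is bounded above by the number of such compositions, $\binom{n+d'-1}{d'-1}$, which in turn is bounded by $(n+1)^{d'-1} \leq (n+1)^{d-1}$.

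No real obstacle is expected, as both steps are essentially textbook. The only subtle points are verifying the character-sum identity, which is precisely what makes the group average collapse onto the pinching, and noting that coincidences among products arising from distinct types can only \emph{decrease} $|\textnormal{\texttt{spec}}(H^{\otimes n})|$, so the type-counting argument yields a valid upper bound regardless of any such collisions.
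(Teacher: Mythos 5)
Your proof is correct. The paper states this lemma without proof, citing \cite{Hay02}; your two arguments --- realising the pinching map as a uniform average over the commuting unitaries $U_k=\sum_j \omega^{jk}e_j$ and dropping all but the $k=0$ term, and the type-counting bound $\binom{n+d'-1}{d'-1}\leq (n+1)^{d-1}$ for the spectrum of $H^{\otimes n}$ --- are exactly the standard ones from that reference, so there is nothing to add.
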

	
	We have the following relation between divergences that will be used in our proofs.
	\begin{lemm}[Relation between divergences {\cite[Lemma~12, Proposition~13, Theorem~14]{TH13}}] \label{lemm:relation}
		For every density operator $\rho$, positive semi-definite operator $\sigma$, $0< \varepsilon < 1$, and $0 < \delta < 1-\varepsilon$, we have
		\begin{align}
			&D_\textnormal{s}^{\varepsilon - \delta}\left( \mathscr{P}_{\sigma}[\rho] \,\|\,\sigma \right) \leq D_\textnormal{h}^{\varepsilon + \delta}(\rho \,\|\,\sigma) + \log |\textnormal{\texttt{spec}}(\sigma)| + 2 \log \delta\,; \notag\\
			&D_\textnormal{h}^\varepsilon(\rho \,\|\,\sigma) \leq D_\textnormal{s}^{\varepsilon + \delta} (\rho \,\|\,\sigma) - \log \delta\,.\notag
		\end{align}
	\end{lemm}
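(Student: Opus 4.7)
Both estimates are classical relations between the hypothesis-testing and information-spectrum divergences, and my plan is to reproduce the standard Neyman--Pearson-type manipulations, combined with a pinching step for the first (non-commuting) bound. The second inequality is a pure threshold argument; the first reduces to essentially the same argument after commuting $\rho$ with $\sigma$ via the pinching map, at the cost of a dimension factor.

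For the second inequality I would start from the pointwise trace bound, valid for any $0\leq T\leq \mathds{1}$ and $c>0$,
\begin{align}
    \Tr\bigl[(\rho - c\sigma)T\bigr] \;\leq\; \Tr\bigl[(\rho - c\sigma)_+\bigr] \;\leq\; \Tr\bigl[\rho\{\rho > c\sigma\}\bigr].\notag
\end{align}
Choosing $\log c$ slightly above $D_\textnormal{s}^{\varepsilon+\delta}(\rho\,\|\,\sigma)$ forces $\Tr[\rho\{\rho > c\sigma\}] < 1-\varepsilon-\delta$ by the definition of $D_\textnormal{s}$. For any candidate test $T$ with $\Tr[\rho T]\geq 1-\varepsilon$, the display rearranges to $c\,\Tr[\sigma T]\geq \Tr[\rho T]-\Tr[\rho\{\rho>c\sigma\}]\geq \delta$, giving $-\log \Tr[\sigma T]\leq \log c + \log(1/\delta)$. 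Taking the supremum over valid $T$ and sending $\log c \downarrow D_\textnormal{s}^{\varepsilon+\delta}(\rho\,\|\,\sigma)$ yields the claim.

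For the first inequality, the plan is to exploit that $P:=\mathscr{P}_\sigma[\rho]$ commutes with $\sigma$, so the same threshold $T=\{P>c\sigma\}$ is now a genuine projection commuting with both $P$ and $\sigma$; on its range one has $PT\geq c\,\sigma T$, and hence $\Tr[\sigma T]\leq \Tr[PT]/c\leq 1/c$. Choosing $\log c$ close to $D_\textnormal{s}^{\varepsilon-\delta}(P\,\|\,\sigma)$ makes $T$ feasible at type-I level $1-(\varepsilon-\delta)$, which by a Neyman--Pearson mixing step (adding a small fraction of $\mathds{1}$ to $T$, as in the second inequality) can be upgraded to type-I level $1-\varepsilon$ at the cost of only a $\log(1/\delta)$-type correction in the type-II bound. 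To return from $P$ to $\rho$, I would use the pinching inequality (Lemma~\ref{lemm:pinching}), $\rho\leq |\textnormal{\texttt{spec}}(\sigma)|\cdot P$: pinching any test witnessing $D_\textnormal{h}^{\varepsilon+\delta}(\rho\,\|\,\sigma)$ by $\mathscr{P}_\sigma$ preserves the type-II error (since $\mathscr{P}_\sigma$ fixes $\sigma$) while inflating $\Tr[\rho T]$ into $\Tr[PT]$ up to the factor $|\textnormal{\texttt{spec}}(\sigma)|$, contributing the additive $\log|\textnormal{\texttt{spec}}(\sigma)|$ on the right-hand side. Chaining the three steps gives the stated bound with the collected $\delta$-slack.

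The main obstacle I expect is the careful bookkeeping of the two error parameters across these steps, since the commuting Neyman--Pearson threshold, the type-I mixing/smoothing, and the pinching transfer each contribute their own $\delta$-like slack, and they must be consolidated into the clean $(\varepsilon-\delta)\leadsto(\varepsilon+\delta)$ shift plus $\log|\textnormal{\texttt{spec}}(\sigma)|$ and the $\log\delta$-type term without double-counting. Once this accounting is in place, the only analytic inputs are the elementary operator inequality $\Tr[(\rho-c\sigma)T]\leq \Tr[(\rho-c\sigma)_+]$, monotonicity of $D_\textnormal{h}^\varepsilon$ in $\varepsilon$, and the pinching inequality already recorded in Lemma~\ref{lemm:pinching}.
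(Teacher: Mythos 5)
Note first that the paper does not prove this lemma at all: it is imported from \cite{TH13} (Lemma~12, Proposition~13, Theorem~14), so there is no internal proof to compare your proposal against. Your argument for the \emph{second} inequality is correct and is the standard one: the operator bound $\Tr[(\rho-c\sigma)T]\leq\Tr[(\rho-c\sigma)_+]\leq\Tr[\rho\{\rho>c\sigma\}]$, combined with the choice $\log c\downarrow D_\textnormal{s}^{\varepsilon+\delta}(\rho\,\|\,\sigma)$, forces $c\Tr[\sigma T]>\delta$ for every test $T$ with $\Tr[\rho T]\geq 1-\varepsilon$, which is exactly the claim.

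\textbf{The first inequality is where the problems are.} (i) As literally stated it is false, so no proof can succeed: take $\rho=\sigma=\mathds{1}/d$, so that $\mathscr{P}_\sigma[\rho]=\sigma$ and $|\textnormal{\texttt{spec}}(\sigma)|=1$; the left-hand side equals $0$ while the right-hand side equals $-\log(1-\varepsilon-\delta)+2\log\delta=\log\tfrac{\delta^2}{1-\varepsilon-\delta}$, which is negative for, say, $\varepsilon=\tfrac12$, $\delta=\tfrac1{10}$. The sign of the last term is evidently a typo: the true statement (and the one actually used in the achievability proof of Section~\ref{sec:lower}) carries $+2\log\tfrac1\delta$, and in fact the clean bound $D_\textnormal{s}^{\varepsilon-\delta}(\mathscr{P}_\sigma[\rho]\,\|\,\sigma)\leq D_\textnormal{h}^{\varepsilon+\delta}(\rho\,\|\,\sigma)$ holds with no correction terms at all. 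Your proposal does not detect this and asserts that ``chaining the three steps gives the stated bound''; but each step you describe (threshold test, mixing, pinching transfer) contributes only a \emph{nonnegative} additive slack, so no bookkeeping can produce the negative term $2\log\delta$. (ii) The mechanism itself is also off in two places. The ``Neyman--Pearson mixing step'' is unnecessary and backwards: the threshold test $T=\{\mathscr{P}_\sigma[\rho]>c\sigma\}$ already satisfies $\Tr[\mathscr{P}_\sigma[\rho]\,T]\geq 1-\varepsilon+\delta$, which a fortiori meets the \emph{more lenient} feasibility level $1-\varepsilon-\delta$ required for $D_\textnormal{h}^{\varepsilon+\delta}$; nothing needs upgrading. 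And the pinching-inequality step points the wrong way: $\rho\leq|\textnormal{\texttt{spec}}(\sigma)|\cdot\mathscr{P}_\sigma[\rho]$ yields an \emph{upper} bound on $\Tr[\rho T]$, whereas you need a \emph{lower} bound to certify that $T$ is feasible for $D_\textnormal{h}(\rho\,\|\,\sigma)$. The correct observation is that $T$ commutes with $\sigma$ and is therefore fixed by $\mathscr{P}_\sigma$, so $\Tr[\rho T]=\Tr[\mathscr{P}_\sigma[\rho]\,T]$ holds \emph{exactly}, while $T(\mathscr{P}_\sigma[\rho]-c\sigma)T\geq 0$ gives $\Tr[\sigma T]\leq 1/c$; no factor of $|\textnormal{\texttt{spec}}(\sigma)|$ is needed anywhere.
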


  \begin{lemm}[Second-order expansion \cite{TH13, Li14}]\label{lemm:second}
    For every density operator $\rho$, positive definite operator $\sigma$, $0<\varepsilon<1$, and $\delta = O(\frac{1}{\sqrt{n}})$, we have the following expansion:
    \begin{align}
        D_\textnormal{h}^{\varepsilon \pm \delta} \left(\rho^{\otimes n} \,\|\, \sigma^{\otimes n} \right) = n D\left(\rho \,\|\, \sigma\right) + \sqrt{n V\left(\rho \,\|\, \sigma\right)} \Phi^{-1} (\varepsilon) + O\left(\log n\right),\notag
    \end{align}
    where $\Phi^{-1}$ is the inverse of cumulative normal distribution
    \begin{align} \label{eq:inverse_Phi}
    	\Phi^{-1}(\varepsilon) := \sup\{u\mid \int_{-\infty}^u \frac{1}{\sqrt{2\pi}} \mathrm{e}^{-\frac12 t^2}\mathrm{d}t\leq \varepsilon\}.
    \end{align}
    \end{lemm}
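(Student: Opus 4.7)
The plan is to reduce $D_\textnormal{h}^{\varepsilon \pm \delta}$ to the information-spectrum divergence $D_\textnormal{s}^\varepsilon$, then reduce the non-commuting question to a classical one by pinching, and finally apply the Berry-Esseen central limit theorem to the resulting i.i.d.\ log-likelihood-ratio sum. This is essentially the argument of Tomamichel-Hayashi~\cite{TH13} and Li~\cite{Li14}.

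First, set $\delta' = 1/\sqrt{n}$ and write $\bar\rho_n := \mathscr{P}_{\sigma^{\otimes n}}[\rho^{\otimes n}]$. Lemma~\ref{lemm:relation} then sandwiches
\begin{align*}
D_\textnormal{s}^{\varepsilon \pm \delta - \delta'}(\bar\rho_n \,\|\, \sigma^{\otimes n}) - O(\log n)
\;\leq\; D_\textnormal{h}^{\varepsilon \pm \delta}(\rho^{\otimes n} \,\|\, \sigma^{\otimes n})
\;\leq\; D_\textnormal{s}^{\varepsilon \pm \delta + \delta'}(\rho^{\otimes n} \,\|\, \sigma^{\otimes n}) + O(\log n),
\end{align*}
where the $O(\log n)$ losses come from $\log |\textnormal{\texttt{spec}}(\sigma^{\otimes n})| = O(\log n)$ via Lemma~\ref{lemm:pinching} and from $-\log \delta'$. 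A further application of the pinching inequality $\bar\rho_n \geq |\textnormal{\texttt{spec}}(\sigma^{\otimes n})|^{-1}\rho^{\otimes n}$ inside the definition \eqref{eq:Ds} converts the upper bound's $\rho^{\otimes n}$ to $\bar\rho_n$ at an additional cost of $O(\log n)$. The whole problem thus reduces to expanding $D_\textnormal{s}^{\varepsilon \pm O(1/\sqrt{n})}(\bar\rho_n \,\|\, \sigma^{\otimes n})$ to second order.

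Because $\bar\rho_n$ commutes with $\sigma^{\otimes n}$, the log-likelihood operator $T_n := \log\bar\rho_n - \log\sigma^{\otimes n}$ is diagonal in a joint eigenbasis, and the divergence becomes the classical quantile $\sup\{t : \Tr[\bar\rho_n \, \mathds{1}\{T_n \leq t\}] \leq \varepsilon\}$. The identity $\Tr[\bar\rho_n\log\sigma^{\otimes n}] = \Tr[\rho^{\otimes n}\log\sigma^{\otimes n}] = n\Tr[\rho\log\sigma]$ together with a moment comparison against the Nussbaum-Szkola distributions $(P,Q)$ of $(\rho,\sigma)$ -- which satisfy $D(P\|Q) = D(\rho\|\sigma)$, $V(P\|Q) = V(\rho\|\sigma)$, and tensorize as $(P^{\otimes n}, Q^{\otimes n})$ for $(\rho^{\otimes n}, \sigma^{\otimes n})$ -- yields $\Exp[T_n] = nD(\rho\|\sigma) + O(\log n)$ and $\Var[T_n] = nV(\rho\|\sigma) + O(\log n)$. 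The Berry-Esseen theorem (finite third absolute moment is automatic in finite dimension) then gives
\begin{align*}
D_\textnormal{s}^{\varepsilon}(\bar\rho_n \,\|\, \sigma^{\otimes n}) = nD(\rho\|\sigma) + \sqrt{nV(\rho\|\sigma)}\,\Phi^{-1}(\varepsilon) + O(\log n),
\end{align*}
and the Taylor bound $\Phi^{-1}(\varepsilon + O(1/\sqrt{n})) = \Phi^{-1}(\varepsilon) + O(1/\sqrt{n})$ absorbs the smoothing shifts $\pm\delta \pm \delta'$ into $O(\log n)$ after multiplication by $\sqrt{nV}$.

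The delicate step is the moment computation for $T_n$: naively one would want $\bar\rho_n = (\mathscr{P}_\sigma[\rho])^{\otimes n}$, but this fails whenever $\sigma^{\otimes n}$ has additional eigenvalue coincidences among tensor products of individual $\sigma$-eigenvalues, so the pinching does not factorize across copies. The Nussbaum-Szkola construction sidesteps this by providing an \emph{exact} classical surrogate $(P^{\otimes n}, Q^{\otimes n})$ whose i.i.d.\ structure gives the correct first two moments on the nose, with the residual pinching gap between $(\bar\rho_n, \sigma^{\otimes n})$ and that surrogate controlled by $\log|\textnormal{\texttt{spec}}(\sigma^{\otimes n})| = O(\log n)$ through Lemma~\ref{lemm:pinching}. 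This reduces the full quantum second-order expansion to a standard Berry-Esseen estimate.
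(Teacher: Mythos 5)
First, note that the paper does not prove Lemma~\ref{lemm:second} at all: it is imported verbatim from Tomamichel--Hayashi and Li, so your reconstruction can only be judged against those references. Your skeleton --- pass from $D_\textnormal{h}^{\varepsilon\pm\delta}$ to $D_\textnormal{s}$ via Lemma~\ref{lemm:relation}, reduce to a commutative problem, and finish with Berry--Esseen --- is indeed the architecture of that proof, and your bookkeeping of the $O(\log n)$ losses from $\log|\textnormal{\texttt{spec}}(\sigma^{\otimes n})|$ and $-\log\delta'$ is fine. But two of your load-bearing steps do not work as stated. The first is the converse-side reduction: you claim that ``a further application of the pinching inequality $\bar\rho_n \geq |\textnormal{\texttt{spec}}(\sigma^{\otimes n})|^{-1}\rho^{\otimes n}$ inside the definition \eqref{eq:Ds}'' converts $D_\textnormal{s}^{\varepsilon}(\rho^{\otimes n}\,\|\,\sigma^{\otimes n})$ into $D_\textnormal{s}^{\varepsilon'}(\bar\rho_n\,\|\,\sigma^{\otimes n})$ at additive cost $O(\log n)$. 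Try to carry this out: the projector $\{\rho^{\otimes n}\leq c\,\sigma^{\otimes n}\}$ does not commute with the pinching, and the operator inequality $\rho^{\otimes n}\leq \nu_n\bar\rho_n$ enters the relevant trace quantities \emph{multiplicatively}, e.g.\ $\Tr[(\rho^{\otimes n}-c\sigma^{\otimes n})_+]\leq \nu_n\Tr[(\bar\rho_n-(c/\nu_n)\sigma^{\otimes n})_+]$ with $\nu_n=\mathrm{poly}(n)$ in front of a probability, which destroys the fixed-$\varepsilon$ quantile and hence the $\Phi^{-1}(\varepsilon)$ coefficient. Pinching gives the achievability direction only; the converse direction in \cite{TH13, Li14} goes through a separate Nussbaum--Szko\l{}a comparison lemma that bounds the quantum information-spectrum (or hypothesis-testing) quantity directly by its classical counterpart for $(P^{\otimes n},Q^{\otimes n})$. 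You invoke Nussbaum--Szko\l{}a, but only as a device for computing moments, not where it is actually indispensable.

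The second gap is the application of Berry--Esseen itself. You apply it to $T_n=\log\bar\rho_n-\log\sigma^{\otimes n}$ under $\bar\rho_n$ on the strength of having matched its first two moments to $nD$ and $nV$ up to $O(\log n)$. But, as you yourself observe, the pinching does not factorize across copies, so $T_n$ is \emph{not} a sum of independent random variables, and matching two moments of a dependent sum yields no central limit theorem and no quantile estimate. The correct mechanism is to apply Berry--Esseen to the genuinely i.i.d.\ surrogate $\log(P^{\otimes n}/Q^{\otimes n})$ under $P^{\otimes n}$, and then to transfer the resulting quantile to $D_\textnormal{s}^{\varepsilon}(\bar\rho_n\,\|\,\sigma^{\otimes n})$ via a cumulative-distribution-level comparison in which the threshold is shifted by at most $\log|\textnormal{\texttt{spec}}(\sigma^{\otimes n})|=O(\log n)$; this CDF-shift lemma, not a moment comparison, is the technical heart of the cited proofs. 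Your final paragraph gestures at exactly this (``residual pinching gap \ldots controlled by $\log|\textnormal{\texttt{spec}}(\sigma^{\otimes n})|$''), but as written the logical chain runs through moments, which is insufficient. To repair the proposal, replace both weak links by the explicit Nussbaum--Szko\l{}a comparison lemmas of \cite{TH13, Li14} for $D_\textnormal{s}$ in each direction.
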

\begin{lemm}[Moderate deviations {\cite[Theorem 1]{CTT2017}}] \label{lemm:moderate}
Let $(a_n)_{n\in\mathds{N}}$ be a moderate sequence satisfying 
\begin{align} \label{eq:an}
        \lim_{n\to \infty} a_n = 0 \ , \ 
        \lim_{n\to \infty} n a_n^2 = \infty \pl. 
\end{align}
and let $\varepsilon_n := \mathrm{e}^{-n a_n^2}$. For any density operator $\rho$ and positive definite operator $\sigma$, the following two asymptotic expansions hold
\begin{align}
    &\frac{1}{n} D_\textnormal{h}^{1 - \varepsilon_n } \left(\rho^{\otimes n} \,\|\, \sigma^{\otimes n} \right) = D\left(\rho \,\|\, \sigma\right) + \sqrt{2 V\left(\rho \,\|\, \sigma\right)} a_n + o\left(a_n\right) \notag\\
    &\frac{1}{n} D_\textnormal{h}^{\varepsilon_n } \left(\rho^{\otimes n} \,\|\, \sigma^{\otimes n} \right) = D\left(\rho \,\|\, \sigma\right) - \sqrt{2 V\left(\rho \,\|\, \sigma\right)} a_n + o\left(a_n\right).\notag
\end{align}
\end{lemm}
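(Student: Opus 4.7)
The plan is to reduce the quantum statement to a classical tail estimate for sums of i.i.d.\ random variables via pinching, and then invoke a sharp classical moderate-deviation principle. I focus on the first asymptotic expansion (for $1-\varepsilon_n$); the second follows by a symmetric argument.

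First, I would use Lemma~\ref{lemm:relation} with a smoothing parameter $\delta_n$ of polynomial order, e.g.\ $\delta_n = 1/n$, to obtain the sandwich
\begin{align}
D_\textnormal{s}^{1-\varepsilon_n-\delta_n}\bigl(\mathscr{P}_{\sigma^{\otimes n}}[\rho^{\otimes n}]\,\|\,\sigma^{\otimes n}\bigr) - O(\log n) \leq D_\textnormal{h}^{1-\varepsilon_n}\bigl(\rho^{\otimes n}\,\|\,\sigma^{\otimes n}\bigr) \leq D_\textnormal{s}^{1-\varepsilon_n+\delta_n}\bigl(\rho^{\otimes n}\,\|\,\sigma^{\otimes n}\bigr) + O(\log n).\notag
\end{align}
The $O(\log n)$ remainders come from $\log|\textnormal{\texttt{spec}}(\sigma^{\otimes n})|\leq (d-1)\log(n+1)$ (Lemma~\ref{lemm:pinching}) and from $\log \delta_n$. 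Because the hypothesis $n a_n^2 \to \infty$ forces $n a_n \gg \sqrt{n} \gg \log n$, these remainders are $o(n a_n)$, hence $o(a_n)$ after dividing by $n$.

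Second, after pinching, $\mathscr{P}_{\sigma^{\otimes n}}[\rho^{\otimes n}]$ and $\sigma^{\otimes n}$ commute, so the information-spectrum divergence collapses to a classical quantile:
\begin{align}
D_\textnormal{s}^{\eta}\bigl(\mathscr{P}_{\sigma^{\otimes n}}[\rho^{\otimes n}]\,\|\,\sigma^{\otimes n}\bigr) = \sup\bigl\{ c \in \mathds{R} : \mathds{P}[Z_n \leq c] \leq \eta \bigr\},\notag
\end{align}
where $Z_n$ denotes the random variable obtained by measuring $\log(\mathscr{P}_{\sigma^{\otimes n}}[\rho^{\otimes n}]/\sigma^{\otimes n})$ in the common $\sigma^{\otimes n}$-eigenbasis under the law $\rho^{\otimes n}$. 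The tensor-product structure forces $Z_n = \sum_{i=1}^n X_i$ for i.i.d.\ bounded random variables $X_i$ with mean $D(\rho\,\|\,\sigma)$ and variance $V(\rho\,\|\,\sigma)$. I would then apply a sharp classical moderate-deviation principle: for $t_n\to 0$ with $nt_n^2\to\infty$,
\begin{align}
-\log \mathds{P}\bigl[Z_n \geq nD(\rho\,\|\,\sigma) + nt_n\bigr] = \frac{n t_n^2}{2V(\rho\,\|\,\sigma)}\bigl(1 + o(1)\bigr),\notag
\end{align}
which is a Cramér-type statement with Edgeworth-style sharpening, justified by the smoothness of the cumulant generating function of the bounded $X_i$. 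Inverting at tail level $\varepsilon_n=\mathrm{e}^{-na_n^2}$ yields $t_n = \sqrt{2V(\rho\,\|\,\sigma)}\,a_n + o(a_n)$, whence
\begin{align}
\tfrac{1}{n} D_\textnormal{s}^{1-\varepsilon_n}\bigl(\mathscr{P}_{\sigma^{\otimes n}}[\rho^{\otimes n}]\,\|\,\sigma^{\otimes n}\bigr) = D(\rho\,\|\,\sigma) + \sqrt{2V(\rho\,\|\,\sigma)}\,a_n + o(a_n).\notag
\end{align}
The same conclusion survives smoothing by $\pm\delta_n$ because the polynomial perturbation of $\varepsilon_n$ changes the exponent $na_n^2$ only by a negligible amount. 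Combining with the sandwich bound gives the first asymptotic expansion; replacing ``$\geq$'' by ``$\leq$'' in the tail estimate gives the second.

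The main obstacle I anticipate is extracting the classical moderate-deviation expansion with a sub-leading remainder sharp enough to yield a $o(a_n)$ quantile correction rather than merely $o(1)$. This requires going one step beyond the basic logarithmic MDP, typically via a Bahadur--Ranga Rao prefactor or an explicit Cramér tilt with curvature control, both of which rest on the boundedness and non-degeneracy of the $X_i$'s inherited from the strict positivity of $\sigma$ on its support. Once that classical input is in hand, the pinching reduction and the sandwich inequalities of Lemma~\ref{lemm:relation} close the argument routinely.
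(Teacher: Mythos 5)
The paper does not actually prove Lemma~\ref{lemm:moderate}: it is imported wholesale from \cite[Theorem~1]{CTT2017}, where the argument goes through one-shot bounds on $D_\textnormal{h}^{\varepsilon}$ in terms of R\'enyi divergences and a Taylor expansion around $\alpha=1$, not through a classical moderate-deviation principle. Your pinching-plus-classical-MDP route is a legitimate alternative in outline, but two of your steps fail as written. First, the choice $\delta_n = 1/n$, together with the claim that ``the polynomial perturbation of $\varepsilon_n$ changes the exponent $na_n^2$ only by a negligible amount,'' is wrong. Whenever $na_n^2 \gg \log n$ (e.g.\ $a_n = n^{-1/3}$) one has $\varepsilon_n = \mathrm{e}^{-na_n^2} \ll 1/n$, so the perturbed tail level is $\varepsilon_n + 2\delta_n \approx 2/n$ and $-\log(\varepsilon_n+2\delta_n) \approx \log n$, not $na_n^2$: the corresponding quantile sits at $nD + \Theta(\sqrt{n\log n})$ instead of $nD + \sqrt{2V}\,na_n$, and your sandwich then misses the entire second-order term. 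The repair is to take $\delta_n$ proportional to $\varepsilon_n$ itself (say $\delta_n = \varepsilon_n/3$); the resulting penalty $2\log\delta_n = -2na_n^2 - O(1)$ is still $o(na_n)$, and the tail level changes only by a multiplicative constant, shifting the exponent by $O(1)$.

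Second, the assertion that ``the tensor-product structure forces $Z_n = \sum_{i=1}^n X_i$ for i.i.d.\ $X_i$'' is false: the pinching in Lemma~\ref{lemm:relation} is with respect to $\sigma^{\otimes n}$, whose eigenprojections merge distinct tensor factors sharing the same product eigenvalue, so $\mathscr{P}_{\sigma^{\otimes n}}[\rho^{\otimes n}] \neq (\mathscr{P}_{\sigma}[\rho])^{\otimes n}$ and $Z_n$ is not an i.i.d.\ sum. Pinching copy-by-copy would restore independence but replaces the first-order constant by $D(\mathscr{P}_{\sigma}[\rho]\,\|\,\sigma)$, which is strictly smaller than $D(\rho\,\|\,\sigma)$ unless $\rho$ and $\sigma$ commute. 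The standard repair is to pass to the Nussbaum--Szko{\l}a distributions of $(\rho,\sigma)$, which are honest i.i.d.\ products with the correct $D$ and $V$ and whose information-spectrum divergence matches the quantum one up to $O(\log n)$; a related omission is that your upper bound involves the \emph{unpinched} $D_\textnormal{s}^{1-\varepsilon_n+\delta_n}(\rho^{\otimes n}\,\|\,\sigma^{\otimes n})$, which also does not collapse to a classical quantile without further work. By contrast, your worry about needing Bahadur--Rao sharpening is unfounded: the logarithmic MDP, $-\log(\text{tail}) = \frac{nt_n^2}{2V}(1+o(1))$, already inverts to $t_n = \sqrt{2V}\,a_n + o(a_n)$, which is all the statement requires.
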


\begin{lemm}[Joint convexity {\cite[Proposition 3]{FL13}, \cite{MDS+13, WWY14}}] \label{lemm:joint_convexity}
    The map
    \begin{align}
        (\rho,\sigma) \mapsto \exp D_2^*(\rho\,\|\,\sigma)\notag
    \end{align}
    is jointly convex on all positive semi-definite operators $\rho$ and positive definite operators $\sigma$.
    \end{lemm}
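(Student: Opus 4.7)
The plan is to establish joint convexity via a variational (dual) representation of $\exp D_2^*(\rho\|\sigma)$, reducing the problem to showing that a parameterized family of jointly convex functions has a jointly convex supremum. First I would rewrite
\begin{align}
\exp D_2^*(\rho\|\sigma) = \Tr\left[\left(\sigma^{-1/4}\rho\sigma^{-1/4}\right)^2\right] = \Tr\left[\rho\sigma^{-1/2}\rho\sigma^{-1/2}\right],\notag
\end{align}
and then propose the dual formula
\begin{align}
\exp D_2^*(\rho\|\sigma) = \sup_{X}\,g_X(\rho,\sigma), \qquad g_X(\rho,\sigma) := 2\,\mathrm{Re}\,\Tr[X^*\rho] - \Tr[\sigma^{1/2}X^*\sigma^{1/2}X],\notag
\end{align}
with supremum over self-adjoint operators $X$. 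Verifying this is a completion-of-the-square exercise: the $X$-dependence of $g_X$ is a linear term minus the positive-definite quadratic form $\|\sigma^{1/4}X\sigma^{1/4}\|_2^2$, so $g_X$ is strictly concave in $X$; solving the critical-point equation yields the optimizer $X = \sigma^{-1/2}\rho\sigma^{-1/2}$, and substituting back returns $\Tr[\rho\sigma^{-1/2}\rho\sigma^{-1/2}]$.

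Next I would check that each $g_X$ is jointly convex in $(\rho,\sigma)$. The first summand $2\,\mathrm{Re}\,\Tr[X^*\rho]$ is linear in $\rho$ and independent of $\sigma$, hence trivially jointly convex. For the second summand, one needs that $\sigma \mapsto \Tr[X^*\sigma^{1/2}X\sigma^{1/2}]$ is concave in $\sigma$; this is exactly the diagonal specialization ($A=B=\sigma$) of Lieb's concavity theorem, namely the joint concavity of $(A,B)\mapsto \Tr[K^*A^{1/2}KB^{1/2}]$ on positive operators. Consequently $-\Tr[\sigma^{1/2}X^*\sigma^{1/2}X]$ is convex in $\sigma$, so $g_X$ decomposes as a $\rho$-linear plus a $\sigma$-convex function and is therefore jointly convex in $(\rho,\sigma)$. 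Since the pointwise supremum of jointly convex functions is jointly convex, the claim follows.

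The main obstacle is the concavity step for the $\sigma$-quadratic term, which rests on Lieb's concavity theorem, a deep operator inequality not readily derived by elementary means. Everything else is bookkeeping: the variational identity is a standard Legendre-transform calculation, and the supremum-of-convex step is elementary convex analysis. As an alternative I could instead establish the data-processing inequality for $D_2^*$ (again via Lieb) and then derive joint convexity by applying DPI to the classical-quantum dilation $\bar\rho := \sum_i p_i|i\rangle\!\langle i|\otimes \rho_i$, $\bar\sigma := \sum_i p_i|i\rangle\!\langle i|\otimes \sigma_i$; a direct block-diagonal calculation gives $\exp D_2^*(\bar\rho\|\bar\sigma) = \sum_i p_i \exp D_2^*(\rho_i\|\sigma_i)$, and partial-trace DPI delivers the convex inequality. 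Both routes concentrate the difficulty in the same nontrivial ingredient.
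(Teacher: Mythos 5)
Your argument is correct: the variational identity $\exp D_2^*(\rho\|\sigma)=\sup_X\bigl(2\operatorname{Re}\Tr[X^*\rho]-\Tr[\sigma^{1/2}X^*\sigma^{1/2}X]\bigr)$ checks out by completing the square, each $g_X$ is jointly convex because the $\sigma$-term is concave by the $p=q=\tfrac12$ case of Lieb's concavity theorem restricted to the diagonal, and a supremum of jointly convex functions is jointly convex; this also covers general positive semi-definite $\rho$ as required by the unnormalized convention in the paper's footnote. The paper itself gives no proof and simply cites Frank--Lieb \cite[Proposition~3]{FL13}, whose argument is precisely this variational/Legendre-transform route, so your proposal matches the intended proof.
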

    \begin{lemm}[Swap trick]\label{lemm:swap}
    For two operators $A, A' \in \mathcal{L}({A})$, the following holds
    \begin{align}
        \Tr[AA'] = \Tr[(A\otimes A')F_A],\notag
    \end{align}
    where $F_A=\sum_{i,j}\ket{i}\bra{j}\ten \ket{j}\bra{i}$ is the swap operator between system $A$ and $A'$.
    \end{lemm}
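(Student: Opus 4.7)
The plan is to verify the identity by direct computation in a fixed orthonormal basis $\{|i\rangle\}_{i}$ of $A$ (the same basis used to define $F_A$). First I would expand
\begin{align}
A = \sum_{k,l} a_{kl}\,|k\rangle\langle l|,\qquad A' = \sum_{m,n} a'_{mn}\,|m\rangle\langle n|,\notag
\end{align}
so that $\Tr[AA'] = \sum_{k,l} a_{kl} a'_{lk}$. Then I would form the tensor product and multiply by the swap operator, using $\langle l|i\rangle = \delta_{li}$ and $\langle n|j\rangle = \delta_{nj}$ to collapse the sums, obtaining
\begin{align}
(A\otimes A')F_A = \sum_{k,l,m,n} a_{kl}\,a'_{mn}\,|k\rangle\langle n|\otimes|m\rangle\langle l|.\notag
\end{align}

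Next I would take the trace using the tensor-product rule $\Tr[X\otimes Y] = \Tr[X]\Tr[Y]$, which yields $\Tr[|k\rangle\langle n|\otimes|m\rangle\langle l|] = \delta_{kn}\delta_{ml}$. Substituting back gives $\sum_{k,l} a_{kl}\,a'_{lk} = \Tr[AA']$, finishing the verification.

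There is no real obstacle here: the only care needed is bookkeeping of the four indices and ensuring that the basis used to define $F_A$ is the same one used to expand $A$ and $A'$. As a cleaner alternative, one can check the identity on rank-one operators $A = |p\rangle\langle q|$ and $A' = |r\rangle\langle s|$, where both sides evaluate to $\delta_{ps}\delta_{qr}$, and then appeal to bilinearity in $(A,A')$ to conclude the general case.
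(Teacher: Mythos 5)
Your computation is correct: the index bookkeeping checks out, both sides reduce to $\sum_{k,l} a_{kl}\,a'_{lk}$, and the rank-one-plus-bilinearity alternative is equally valid. The paper states this swap trick as a standard fact and provides no proof of its own, so there is nothing to compare against; your direct verification (with the sensible caveat that $F_A$ and the matrix expansions must use the same basis) fully establishes the lemma.
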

    \begin{lemm}[Variational formula of the trace distance \cite{Hel67, Hol72}, {\cite[\S 9]{NC09}}]\label{lemm:1norm}
    For density operators $\rho$ and $\sigma$,
    \begin{align}
    \frac{1}{2}\left\|\rho- \sigma\right\|_1 = \sup_{0\leq \Pi \leq \mathds{1}} \Tr\left[\Pi(\rho-\sigma)\right].\notag
    \end{align}
    \end{lemm}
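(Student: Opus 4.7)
The plan is to exploit the Jordan decomposition of the traceless Hermitian operator $\rho-\sigma$. First I would spectrally decompose $\rho-\sigma$ and split it into its positive and negative parts, writing $\rho-\sigma = P - Q$ with $P,Q\geq 0$ supported on mutually orthogonal subspaces. By the definition of the trace norm one has $\|\rho-\sigma\|_1 = \Tr[P]+\Tr[Q]$; since $\rho$ and $\sigma$ are density operators, $\Tr[\rho-\sigma]=0$ forces $\Tr[P]=\Tr[Q]=\frac{1}{2}\|\rho-\sigma\|_1$.

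For the upper bound on the supremum, I would estimate, for any $0\leq \Pi\leq \mathds{1}$,
\begin{align}
\Tr[\Pi(\rho-\sigma)] \;=\; \Tr[\Pi P] - \Tr[\Pi Q] \;\leq\; \Tr[\Pi P] \;\leq\; \Tr[P] \;=\; \frac{1}{2}\|\rho-\sigma\|_1, \notag
\end{align}
using $\Tr[\Pi Q]\geq 0$ (both factors are positive semi-definite, so the trace equals $\Tr[Q^{1/2}\Pi Q^{1/2}]\geq 0$) for the first inequality, and $\Tr[(\mathds{1}-\Pi)P]\geq 0$ (same reasoning, since $\mathds{1}-\Pi\geq 0$ and $P\geq 0$) for the second.

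For the matching lower bound I would choose $\Pi^{\star}$ to be the spectral projector onto the non-negative eigenspace of $\rho-\sigma$, i.e.~the support projection of $P$ (equivalently, the projector $\{\sigma\leq \rho\}$ from the notation introduced earlier). Then $\Pi^{\star} P=P$ and $\Pi^{\star} Q=0$, so $\Tr[\Pi^{\star}(\rho-\sigma)]=\Tr[P]=\frac{1}{2}\|\rho-\sigma\|_1$ attains the supremum.

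There is no real obstacle here: the argument is essentially algebraic once the Jordan decomposition is in hand, and the only point requiring a moment of care is the positivity of traces of products of positive semi-definite operators, which is standard via the cyclic-trace identity $\Tr[AB]=\Tr[A^{1/2}BA^{1/2}]$.
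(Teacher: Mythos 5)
Your proof is correct and is the standard argument for this classical fact (Helstrom/Holevo; Nielsen--Chuang \S 9), which the paper itself cites without reproving: Jordan decomposition of the traceless Hermitian operator $\rho-\sigma$, the upper bound from positivity of traces of products of positive semi-definite operators, and attainment at the projector onto the positive part. No gaps.
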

    \begin{lemm}[Lower bound on the collision divergence {\cite[Theorem~3]{BEI17}}] \label{lemm:D2toDs}
    For every $0<\eta<1$ and $\lambda_1, \lambda_2>0$, density operator $\rho$, and positive semi-definite $\sigma$, we have\footnote{Ref.~\cite[Thm.~3]{BEI17} is stated for $\lambda_1= \lambda \in (0,1)$ and $\lambda_2 = 1-\lambda$. We remark that the result applies to the case $\lambda_1, \lambda_2>0$ by following the same proof.}
    \begin{align}
        \exp D_2^*\left( \rho  \,\|\, \lambda_1 \rho + \lambda_2 \sigma \right) \geq \frac{1-\eta}{\lambda_1 + \lambda_2 \cdot \mathrm{e}^{ - D_\text{s}^\eta (\rho  \,\|\, \sigma )}  }\,.\notag
    \end{align}
    \end{lemm}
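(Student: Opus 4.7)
The plan is to reduce the claim to a two-outcome classical inequality by processing both arguments of $D_2^*$ through the measurement induced by the optimal hypothesis test in $D_\textnormal{s}^{\eta}(\rho\,\|\,\sigma)$, and then apply Cauchy--Schwarz.

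First I would pick $c$ slightly below $\mathrm{e}^{D_\textnormal{s}^{\eta}(\rho\,\|\,\sigma)}$ (or equal, if the supremum in \eqref{eq:Ds} is attained) and set $P := \mathds{1} - \{\rho \le c\sigma\}$. Two properties of $P$ are all that the $D_\textnormal{s}^\eta$ hypothesis will contribute: $\Tr[\rho P] \ge 1-\eta$ directly from the definition, and $c\Tr[\sigma P] \le \Tr[\rho P]$ from $P(\rho - c\sigma)P \ge 0$.

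Next, writing $\tau := \lambda_1\rho + \lambda_2\sigma$ and $U := 2P-\mathds{1}$, the identity $\hat X := PXP + (\mathds{1}-P)X(\mathds{1}-P) = \tfrac{1}{2}(X + UXU^\dagger)$ combined with joint convexity of $\exp D_2^*$ (Lemma~\ref{lemm:joint_convexity}) and unitary invariance yields
\begin{align}
\exp D_2^*(\rho\,\|\,\tau) \ge \exp D_2^*(\hat\rho\,\|\,\hat\tau).\notag
\end{align}
Since $\hat\rho$ and $\hat\tau$ are block-diagonal with respect to $\{P,\mathds{1}-P\}$, the right-hand side splits into two non-negative block contributions. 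I would discard the $(\mathds{1}-P)$ one and estimate the $P$-block from below by the Hilbert--Schmidt Cauchy--Schwarz inequality, pairing $(P\tau P)^{1/2}$ with $(P\tau P)^{-1/4}(P\rho P)(P\tau P)^{-1/4}$, to obtain
\begin{align}
\exp D_2^*(\hat\rho\,\|\,\hat\tau) \ge \frac{(\Tr[\rho P])^2}{\Tr[\tau P]} = \frac{(\Tr[\rho P])^2}{\lambda_1\Tr[\rho P] + \lambda_2\Tr[\sigma P]} \ge \frac{\Tr[\rho P]}{\lambda_1 + \lambda_2/c} \ge \frac{1-\eta}{\lambda_1 + \lambda_2\mathrm{e}^{-D_\textnormal{s}^\eta(\rho\,\|\,\sigma)}}.\notag
\end{align}
Letting $c \uparrow \mathrm{e}^{D_\textnormal{s}^\eta(\rho\,\|\,\sigma)}$, if necessary, completes the argument.

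The only genuine obstacle is technical bookkeeping: one must make sense of the block inverses $(P\tau P)^{-1/4}$ and $((\mathds{1}-P)\tau(\mathds{1}-P))^{-1/4}$ when $\tau$ is only positive semi-definite. I would handle this by restricting to $\supp(\tau)$ (which contains $\supp(\rho)$ since $\lambda_1>0$), or equivalently by perturbing $\tau \mapsto \tau + \epsilon\mathds{1}$, applying the above, and letting $\epsilon \to 0$; this works because the claimed lower bound is continuous in the data.
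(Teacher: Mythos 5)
Your proof is correct. A point of context first: the paper does not actually prove this lemma --- it is imported verbatim from \cite[Theorem~3]{BEI17}, with only a footnote asserting that the original argument (stated for $\lambda_1=\lambda$, $\lambda_2=1-\lambda$) goes through for arbitrary $\lambda_1,\lambda_2>0$. Your argument is a faithful, self-contained reconstruction of that cited proof: process both states through the two-outcome test $\{P,\mathds{1}-P\}$ with $P=\mathds{1}-\{\rho\le c\sigma\}$, reduce to the classical two-point bound $(\Tr[\rho P])^2/\Tr[\tau P]$, and finish with $\Tr[\rho P]\ge 1-\eta$ and $c\,\Tr[\sigma P]\le\Tr[\rho P]$ (the latter from $P(\rho-c\sigma)P\ge 0$, which is exactly what the paper's convention for $\{\rho\le c\sigma\}$ gives). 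The only real difference is that you realize the data-processing step by hand --- pinching as an average of two unitary conjugations, joint convexity via Lemma~\ref{lemm:joint_convexity}, then Cauchy--Schwarz on the $P$-block --- where Beigi--Gohari simply invoke the DPI for $D_2^*$ under the measurement channel (which this paper also has available as \cite[Theorem~1]{BEI17}). That detour costs a few lines but makes the proof elementary given only the lemmas already stated in the paper. Two things worth noting in your favor: (i) nothing in your chain uses $\lambda_1+\lambda_2=1$, so your write-up actually substantiates the footnote's unproved extension; (ii) your handling of the support/regularization issue for $(P\tau P)^{-1/4}$ is adequate since $\lambda_1>0$ forces $\supp(\rho)\subseteq\supp(\tau)$. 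One small wording fix: ``pick $c$ slightly below the supremum'' should read ``pick a feasible $c$ arbitrarily close to the supremum,'' since a priori only feasible values of $c$ guarantee $\Tr[\rho\{\rho\le c\sigma\}]\le\eta$; the limit $c\uparrow\mathrm{e}^{D_\textnormal{s}^\eta(\rho\|\sigma)}$ along feasible values then gives the stated constant.
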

\begin{lemm}[A property of Choi operator {\cite[Proposition~3.17]{Mar20}}]\label{Choi}
For a CPTP map $\mathcal{T}_{A\to B}$, define $\Gamma^{\mathcal{T}}_{A'B}$ as its Choi operator:
\begin{align}
\Gamma^{\mathcal{T}}_{A'B} = \sum_{i,j\in \mathsf{A}}\ket{i}\bra{j}_{A'}\otimes \mathcal{T}_{A\to B}(\ket{i}\bra{j}_{A}).\notag
\end{align}
Then, for any $X_{A} \in \mathcal{L}({A})$,
\begin{align}
\left(\mathcal{T}_{A\to B}\right)(X_{A}) &=\Tr_{A'}\left[(X_{A'}\otimes \mathds{1}_B)^{T_{A'}}(\Gamma^{\mathcal{T}}_{A'B})\right]  \notag\\
&= \Tr_{A'}\left[(\Gamma^{\mathcal{T}}_{A'B})^{T_{A'}}(X_{A'}\otimes \mathds{1}_B)\right].\notag
\end{align}
where $X_{AB}^{T_A}$ is the partial $A$-transpose of $X_{AB}$.
\end{lemm}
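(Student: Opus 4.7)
Since both sides of each claimed identity are linear in $X_A$, my plan is to verify the identities on rank-one basis operators. Fix an orthonormal basis $\{\ket{i}\}_{i\in \mathsf{A}}$ of $A$, identified canonically with a basis of $A'$, and expand $X_A = \sum_{k,l} x_{kl}\,\ket{k}\!\bra{l}_A$. It then suffices to show
\begin{align}
\mathcal{T}_{A\to B}(\ket{k}\!\bra{l}_A)
= \Tr_{A'}\!\left[\bigl(\ket{k}\!\bra{l}_{A'}\otimes \mathds{1}_B\bigr)^{T_{A'}}\Gamma^{\mathcal{T}}_{A'B}\right]
= \Tr_{A'}\!\left[(\Gamma^{\mathcal{T}}_{A'B})^{T_{A'}}\bigl(\ket{k}\!\bra{l}_{A'}\otimes \mathds{1}_B\bigr)\right].\notag
\end{align}

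For the first equality I would use the convention $(\ket{k}\!\bra{l})^{T} = \ket{l}\!\bra{k}$ and substitute the definition of the Choi operator to obtain
\begin{align}
\bigl(\ket{k}\!\bra{l}_{A'}\otimes \mathds{1}_B\bigr)^{T_{A'}}\Gamma^{\mathcal{T}}_{A'B}
= \sum_{i,j} \delta_{ki}\,\ket{l}\!\bra{j}_{A'}\otimes \mathcal{T}(\ket{i}\!\bra{j}_A)
= \sum_{j} \ket{l}\!\bra{j}_{A'}\otimes \mathcal{T}(\ket{k}\!\bra{j}_A).\notag
\end{align}
Taking $\Tr_{A'}$ and applying $\Tr[\ket{l}\!\bra{j}] = \delta_{lj}$ then leaves exactly $\mathcal{T}(\ket{k}\!\bra{l}_A)$, as required. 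The second equality follows by the identical bookkeeping, but with $T_{A'}$ acting on the Choi operator (sending $\ket{i}\!\bra{j}_{A'} \mapsto \ket{j}\!\bra{i}_{A'}$) and with $\ket{k}\!\bra{l}_{A'}\otimes \mathds{1}_B$ multiplying on the right; the index contractions proceed in exactly the same way.

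The content is essentially a restatement of the Choi--Jamio\l{}kowski correspondence: $\mathcal{T}(X_A)$ is the ``partial-transpose pairing'' of $X_A$ on the auxiliary system $A'$ with the Choi operator. I do not anticipate any real obstacle; the only care required is to fix a partial-transpose convention on $A'$ at the outset and apply it consistently. A coordinate-free alternative would invoke the identity $(Y\otimes \mathds{1})\ket{\Phi} = (\mathds{1}\otimes Y^{T})\ket{\Phi}$ for the unnormalised maximally entangled vector $\ket{\Phi}_{A'A} = \sum_i \ket{i}_{A'}\ket{i}_A$, together with $\Gamma^{\mathcal{T}}_{A'B} = (\mathrm{id}_{A'}\otimes \mathcal{T}_{A\to B})(\ket{\Phi}\!\bra{\Phi}_{A'A})$ and the fact that $\mathcal{T}$ commutes with the partial trace over $A'$; but the basis computation above is already a two-line verification and hence the cleaner route.
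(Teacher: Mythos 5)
Your proof is correct: the reduction by linearity to rank-one basis operators, the consistent partial-transpose convention $\ket{k}\!\bra{l}\mapsto\ket{l}\!\bra{k}$ on $A'$, and the index contractions all check out, and both claimed equalities follow. The paper itself gives no proof of this lemma --- it simply cites \cite[Proposition~3.17]{Mar20} --- and your two-line basis verification is exactly the standard argument behind that reference, so there is nothing further to compare.
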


\section{Quantum Decoupling} \label{sec:sc}
The main result in this paper is the following two-sided one-shot bound of maximal remainder dimension for an $\varepsilon$-decoupling, i.e.~$\ell^{\varepsilon}(A{\,|\,}E)_{\rho}$ introduced in \eqref{def:maxlq}.

	\begin{theo}\label{theo:one-shot_second-order_PA}
		Let $\rho_{AE}$ be a density operator on ${A}{E}$. 
		Then, for every $0<\varepsilon<1$ and $0< c < \delta< \frac{\varepsilon}{3}\wedge\frac{(1-\varepsilon)}{2}$, we have 
		\begin{align}
			\frac{1}{2}\left(\log|{A}|+H_\textnormal{h}^{1-\varepsilon + 3\delta}(A{\,|\,}E)_{\rho}  \right)-\log\frac{\nu}{\delta^2}&\leq\log \ell^{\varepsilon}(A{\,|\,}E)_{\rho}\notag\\
            &\leq 
            \frac{1}{2} \left(\log|{A}| + H^{1-\varepsilon-2\delta}_h(A|E)_{\rho} + \log\frac{(1+c)(\varepsilon + c^{-1})}{\delta(\delta-c^{-1})} \right).\notag
		\end{align}
		Here, $H^{\varepsilon}_\textnormal{h}$ is the conditional $\varepsilon$-hypothesis testing entropy defined in \eqref{eq:one-shot_entropy} and
		$\nu =  |\textnormal{\texttt{spec}}(\rho_E)|$.
	\end{theo}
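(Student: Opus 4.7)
The theorem is two-sided, and the two directions use different but complementary machinery. For both, the ultimate goal is to relate the second moment of the Haar average $\mathds{E}_{U}\|\mathcal{T}_{A\to C}(U_A\rho_{AE}U_A^\dagger)-\omega_C\otimes\rho_E\|_1$ to the collision divergence $D_2^*$, and then convert $D_2^*$ to an information-spectrum divergence $D_{\mathrm s}^\eta$ via Lemma~\ref{lemm:D2toDs}, and finally to the hypothesis-testing entropy $H_{\mathrm h}^{1-\varepsilon\pm\delta}$ via Lemma~\ref{lemm:relation}. This ``$D_2^*\to D_{\mathrm s}\to D_{\mathrm h}$'' chain is what bypasses the lossy smoothing that went through $H_{\min}^\varepsilon$ in the earlier bounds, and is the reason the new bound is tight up to logarithmic additive terms on both sides.

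\textbf{Achievability.} I would begin by choosing a reference state $\sigma_E$ (ultimately a pinched version of $\rho_E$, so that $|\texttt{spec}(\sigma_E)|=\nu$ and $\rho_E$ commutes with it). By the variational form of the trace norm and Lemma~\ref{lemm:Cauchy--Swartz},
\begin{align}
\mathds{E}_U\|\mathcal{T}_{A\to C}(U_A\rho_{AE}U_A^\dagger)-\omega_C\otimes\rho_E\|_1
\;\le\;\sqrt{\,\mathds{E}_U\,\Tr\!\bigl[(\mathcal{T}_{A\to C}(U_A\rho_{AE}U_A^\dagger)-\omega_C\otimes\rho_E)^2\,(\mathds{1}_C\otimes\sigma_E^{-1})\bigr]\,}.\notag
\end{align}
I would expand the square, apply the swap trick (Lemma~\ref{lemm:swap}) to handle the partial trace $\mathcal{T}_{A\to C}=\Tr_{A_1}$, and then carry out the Haar integral by a $2$-design computation; the two Weingarten terms give a bound of the shape $\tfrac{|C|^2}{|A|^2}\,\mathrm{e}^{D_2^*(\rho_{AE}\|\mathds{1}_A\otimes\sigma_E)}$ after absorbing the linear-in-$U$ term which vanishes against $\omega_C\otimes\rho_E$. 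The next step is the crucial one: use Lemma~\ref{lemm:D2toDs} with the convex combination $\lambda_1\rho+\lambda_2\sigma$ hidden in an inflation trick to replace $\exp D_2^*$ by $\mathrm{e}^{-D_{\mathrm s}^{\eta}(\rho_{AE}\|\mathds{1}_A\otimes\sigma_E)}$, with $\eta$ a fraction of $\delta$. Finally, Lemma~\ref{lemm:relation} converts $D_{\mathrm s}^\eta$ to $D_{\mathrm h}^{\eta+\delta}$ at the cost of $\log(1/\delta)$ and $\log\nu$. Setting $\Delta(A|E)_\rho\le\varepsilon$ and inverting yields $\log|C|\le\tfrac12(\log|A|+H_{\mathrm h}^{1-\varepsilon+3\delta}(A|E)_\rho)-\log(\nu/\delta^2)$, which is the achievability claim (after renaming $|C|$ by $\ell^\varepsilon$).

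\textbf{Converse.} Here the plan is dual: exhibit, from the existence of a good decoupler, an explicit binary test $T_{AE}$ that certifies the desired hypothesis-testing divergence. For a given $U$ the variational formula (Lemma~\ref{lemm:1norm}) supplies an optimal projector $\Pi^U_{CE}\in[0,\mathds{1}]$ with $\Tr[\Pi^U_{CE}(\mathcal{T}(U\rho_{AE}U^\dagger)-\omega_C\otimes\rho_E)]=\tfrac12\|\cdot\|_1$. Pulling $\Pi^U_{CE}$ back through $\mathcal T_{A\to C}$ via the Choi identification (Lemma~\ref{Choi}) and averaging over $U$ produces a candidate test $T_{AE}:=\mathds{E}_U\,U^\dagger_A(\mathds{1}_{A_1}\otimes\Pi^U_{CE})U_A$ on $AE$. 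Two calculations are then needed: (i) $\Tr[\rho_{AE}T_{AE}]\ge 1-\varepsilon-2\delta$ up to a truncation step using the parameter $c$ (which is why the bound has the factor $(1+c)(\varepsilon+c^{-1})$ and why $c<\delta$); this comes from combining the near-decoupling with $\Tr[\omega_C\otimes\rho_E\cdot\Pi^U_{CE}]\le 1$; (ii) $\Tr[(\mathds{1}_A\otimes\rho_E)T_{AE}]\le|A|/|C|^2\cdot(\text{small factor})$, which uses $\mathds{E}_U U^\dagger X U=\Tr[X]\,\mathds{1}_A/|A|$ (the $1$-design property) together with $\Tr[\Pi^U_{CE}(\omega_C\otimes\rho_E)]\le 1$. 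Combining (i) and (ii) gives $\mathrm{e}^{-H_{\mathrm h}^{1-\varepsilon-2\delta}(A|E)_\rho}\le(|A|/|C|^2)\cdot(\text{small factor})$, which rearranges to the stated upper bound on $\log\ell^\varepsilon$.

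\textbf{Main obstacle.} The routine parts are the second-moment expansion with the swap trick and the invocation of Lemmas \ref{lemm:D2toDs} and \ref{lemm:relation}. The delicate step—where the gain over \cite{Dup14} is located—is controlling the \emph{error-parameter accounting} simultaneously in both directions without losing constant factors that would spoil the second-order term. Concretely, on the achievability side this means choosing $\sigma_E$ as the pinched marginal so the $\log\nu$ cost remains additive and logarithmic, and routing through $D_2^*\to D_{\mathrm s}\to D_{\mathrm h}$ with a single smoothing parameter $\delta$ rather than nested $\sqrt{\varepsilon}$ losses. On the converse side, the subtlety is that $\mathds{E}_U\Pi^U_{CE}$ is not itself a projector, and directly pulling it back with Lemma~\ref{Choi} risks exceeding $\mathds{1}$; the auxiliary parameter $c$ in the theorem encodes a truncation/rescaling that keeps the resulting operator a legitimate test while preserving the $1-\varepsilon-2\delta$ acceptance probability.
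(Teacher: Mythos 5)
Your high-level diagnosis (avoid smoothing, route everything through $D_2^*\to D_{\mathrm s}\to D_{\mathrm h}$) is only half right, and both of your directions have a gap that the paper's actual proof is specifically designed to avoid. On the \emph{achievability} side, you bound the full second moment by $\tfrac{|C|^2}{|A|^2}\exp D_2^*(\rho_{AE}\,\|\,\mathds{1}_A\otimes\sigma_E)$ and then propose to ``use Lemma~\ref{lemm:D2toDs} to replace $\exp D_2^*$ by $\mathrm{e}^{-D_{\mathrm s}^{\eta}}$.'' This cannot work: Lemma~\ref{lemm:D2toDs} is a \emph{lower} bound on $\exp D_2^*$, whereas an upper bound on the decoupling error requires an \emph{upper} bound on it; and no such upper bound in terms of $D_{\mathrm s}^\eta$ holds for the unsmoothed state (a tiny-weight eigenvector with a huge likelihood ratio blows up $D_2^*$ without moving $D_{\mathrm s}^\eta$ --- this is exactly why the old proofs had to smooth to $H_{\min}^\varepsilon$). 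The paper's key move, absent from your sketch, is to truncate \emph{before} taking second moments: with $\Pi:=\{\mathscr{P}_{\rho_E}[\rho_{AE}]\leq c\,\mathds{1}_A\otimes\rho_E\}$ it splits $\rho_{AE}$ into $\Pi^{\mathrm c}\rho\Pi^{\mathrm c}+\Pi\rho\Pi^{\mathrm c}+\rho\Pi$, bounds the first piece by the tail probability $\Tr[\rho\{\mathscr{P}_{\rho_E}[\rho]>c\,\mathds{1}\otimes\rho_E\}]$ (which defines $c$ via $D_{\mathrm s}^{1-\varepsilon+\delta}$), and applies Lemma~\ref{lemm:Cauchy--Swartz} plus the $\lambda$-randomizing property only to the truncated pieces, where the pinching inequality gives $\Pi\rho_E^{-1}\Pi\leq c\,\nu\,\rho_{AE}^{-1}$ and hence $\Tr[\Pi\rho_{AE}^2\Pi\rho_E^{-1}]\leq c\nu$ with no collision entropy appearing at all.

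On the \emph{converse} side your logic runs in the wrong direction. Exhibiting a single test $T_{AE}$ with $\Tr[\rho_{AE}T]$ large and $\Tr[(\mathds{1}_A\otimes\rho_E)T]$ small \emph{lower}-bounds $D_{\mathrm h}$, i.e.\ \emph{upper}-bounds $H_{\mathrm h}^{1-\varepsilon-2\delta}(A|E)$; but the stated converse $2\log|C|\leq\log|A|+H_{\mathrm h}^{1-\varepsilon-2\delta}+\log(\cdots)$ requires a \emph{lower} bound on $H_{\mathrm h}^{1-\varepsilon-2\delta}$, equivalently showing that \emph{every} admissible test has $\Tr[(\mathds{1}_A\otimes\rho_E)T]\gtrsim|C|^2/|A|$ --- something no single-test construction can certify (your claimed conclusion $\mathrm{e}^{-H_{\mathrm h}}\leq(|A|/|C|^2)\cdot(\text{factor})$ has the inequality reversed relative to what your steps (i)--(ii) produce). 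The paper instead \emph{lower-bounds the decoupling error}: using the quotient test $\Tr_{A_1}[\rho^U]\big/(\Tr_{A_1}[\rho^U]+c\,\omega_C\otimes\rho_E)$ in Lemma~\ref{lemm:1norm}, the bound $H_{\min}(A_1|CE)\geq-\log|A_1|$ to lift to system $AE$, joint convexity of $\exp D_2^*$ (Lemma~\ref{lemm:joint_convexity}) to pull $\mathds{E}_U$ inside, and the exact two-design average of Lemma~\ref{expectUAE}, it obtains $\varepsilon\geq\frac{1}{|A_1|}\exp D_2^*\bigl(\rho_{AE}\,\|\,\beta\rho_{AE}+(\alpha(1+c)+c\beta/|A|)\mathds{1}_A\otimes\rho_E\bigr)-c^{-1}$; only here does Lemma~\ref{lemm:D2toDs} enter, and in its correct (lower-bound) direction, so that a small error forces $D_{\mathrm s}^{1-\varepsilon-\delta}$, hence $D_{\mathrm h}^{1-\varepsilon-2\delta}$ via Lemma~\ref{lemm:relation}, to be small. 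Your instinct that $c$ encodes a truncation and that a Weingarten computation is needed is correct, but the argument you build around these ingredients does not close in either direction.
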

The achievability (lower) and converse (upper) bound in {Theorem}~\ref{theo:one-shot_second-order_PA} are proved respectively in {Section}~\ref{sec:lower} and Section~\ref{sec:converse}.

In the i.i.d.~scenario, combining the result of {Theorem}~\ref{theo:one-shot_second-order_PA} with {Lemma}~\ref{lemm:second}, we have the following second-order asymptotics of $\log\ell^{\varepsilon}(A{\,|\,}E)_{\rho}$. 

\begin{prop}[Second-order asymptotics]\label{qdecoupleiid}
For $\varepsilon\in (0,1)$ and $\mathcal{T}_{A\to C} := \Tr_{A_1}\otimes \id_{C}$, the following asymptotic holds,
\begin{align}
\log \ell^{\varepsilon}(A^n|E^n)_{\rho^{\otimes n}} = n\left( \frac{1}{2}(\log |{A}|+H(A|E)_{\rho})\right) + \frac{1}{2}\sqrt{nV(A|E)_{\rho}}\Phi^{-1}(\varepsilon) + O(\log n)\notag
\end{align}
\end{prop}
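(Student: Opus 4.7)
The plan is to specialize the one-shot two-sided bound of Theorem~\ref{theo:one-shot_second-order_PA} to the tensor product state $\rho_{AE}^{\otimes n}$ and then extract the asymptotic expansion via Lemma~\ref{lemm:second}. I would choose the smoothing parameter $\delta_n = 1/\sqrt{n}$ together with a compatible $c_n$ so that the admissibility conditions $0<c_n<\delta_n<\tfrac{\varepsilon}{3}\wedge\tfrac{1-\varepsilon}{2}$ hold for all sufficiently large $n$ (automatic since $\varepsilon \in (0,1)$ is fixed), which simultaneously places us in the $\delta_n = O(1/\sqrt{n})$ regime of the second-order expansion.

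Under this substitution the leading term $\log|A|$ becomes $n\log|A|$. By Lemma~\ref{lemm:pinching}, $\nu_n := |\textnormal{\texttt{spec}}(\rho_E^{\otimes n})| \leq (n+1)^{|E|-1}$, so $\log(\nu_n/\delta_n^2) = O(\log n)$, and the converse-side additive correction is likewise $O(\log n)$ for an appropriate choice of $c_n$. The heart of the argument is then Lemma~\ref{lemm:second} applied to the conditional hypothesis-testing entropy: writing $H_\textnormal{h}^{\varepsilon'}(A^n|E^n)_{\rho^{\otimes n}} = -D_\textnormal{h}^{\varepsilon'}(\rho_{AE}^{\otimes n} \,\|\, \mathds{1}_A^{\otimes n} \otimes \rho_E^{\otimes n})$ and setting $\varepsilon' = 1-\varepsilon \pm O(\delta_n)$, the lemma yields
\begin{align}
H_\textnormal{h}^{1-\varepsilon \pm O(\delta_n)}(A^n|E^n)_{\rho^{\otimes n}} = n H(A|E)_\rho - \sqrt{n V(A|E)_\rho}\,\Phi^{-1}(1-\varepsilon) + O(\log n).\notag
\end{align}

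Invoking the symmetry $\Phi^{-1}(1-\varepsilon) = -\Phi^{-1}(\varepsilon)$ of the standard normal quantile and plugging into Theorem~\ref{theo:one-shot_second-order_PA}, both the achievability and the converse bounds collapse to the single expression $\tfrac{n}{2}\bigl(\log|A| + H(A|E)_\rho\bigr) + \tfrac{1}{2}\sqrt{n V(A|E)_\rho}\,\Phi^{-1}(\varepsilon) + O(\log n)$, which is exactly what is asserted. The whole derivation is essentially bookkeeping; the one point that genuinely needs to be verified is that the admissibility constraints on $(c,\delta)$ in the converse bound of Theorem~\ref{theo:one-shot_second-order_PA} remain compatible with $\delta_n = \Theta(1/\sqrt{n})$ while keeping the additive correction logarithmic. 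The deeper reason that matched second-order asymptotics drop out here is that the smoothing parameters $1-\varepsilon+3\delta_n$ and $1-\varepsilon-2\delta_n$ on the two sides of Theorem~\ref{theo:one-shot_second-order_PA} differ only at order $\delta_n$, which is below the resolution of the second-order expansion---in stark contrast with previous bounds where smoothing or distance conversions produced mismatches of the form $\Phi^{-1}(\varepsilon^2/9)$ versus $\Phi^{-1}(8\varepsilon)$.
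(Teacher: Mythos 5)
Your proposal is correct and takes essentially the same route as the paper, which proves this proposition only by the one-line remark that it follows from combining Theorem~\ref{theo:one-shot_second-order_PA} with Lemma~\ref{lemm:second}; your choice $\delta_n = 1/\sqrt{n}$, the pinching bound $\nu_n \leq (n+1)^{|E|-1}$, and the quantile symmetry $\Phi^{-1}(1-\varepsilon) = -\Phi^{-1}(\varepsilon)$ are exactly the intended bookkeeping. The one caveat you rightly flag --- compatibility of $(c,\delta)$ in the converse --- is resolved by taking, e.g., $c_n = 2/\delta_n$ (the constraint in the theorem should be read as $c^{-1} < \delta$, as the term $\delta - c^{-1}$ in the converse correction makes clear), which keeps that additive correction $O(\log n)$.
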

By using Lemma~\ref{lemm:moderate}, the bounds in Theorem~\ref{theo:one-shot_second-order_PA} also applies to the moderate deviation regime \cite{CTT2017, CH17}; namely, we derive the optimal rate of the maximal required output dimension when the error approaches zero moderately fast. 
\begin{prop}[Moderate deviations] \label{prop:moderate_QD}
For any $\rho_{AE}$ and any $\varepsilon_n :=  \mathrm{e}^{-na_n^2}$, where $(a_n)_{n\in \mathds{N}}$ is defined in \eqref{eq:an}, we have the asymptotic expansion,
\begin{align} 
\begin{cases}
    &\frac1n \log \ell^{\varepsilon_n} (A^n{\,|\,}E^n)_{\rho^{\otimes n}} = \frac{1}{2}(\log|{A}|+H(A{\,|\,}E)_\rho) - \sqrt{\frac{V(A{\,|\,}E)_{\rho}}{2} } \, a_n + o\left(a_n\right);\\
    &\frac1n \log \ell^{1-\varepsilon_n} (A^n{\,|\,}E^n)_{\rho^{\otimes n}} = \frac{1}{2}(\log|{A}|+H(A{\,|\,}E)_\rho) + \sqrt{\frac{V(A{\,|\,}E)_{\rho}}{2} } \, a_n + o\left(a_n\right).\notag
\end{cases}
\end{align}
\end{prop}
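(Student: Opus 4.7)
The plan is to specialize Theorem~\ref{theo:one-shot_second-order_PA} to the i.i.d.\ product state $\rho_{AE}^{\otimes n}$ and then pass to the limit using Lemma~\ref{lemm:moderate}. The real task is to choose the free parameters as sequences $\delta_n, c_n$ so that (i) the residual logarithmic terms appearing in Theorem~\ref{theo:one-shot_second-order_PA} contribute only $o(n a_n)$ to the rate, and (ii) the shifted smoothing arguments $1-\varepsilon_n \pm O(\delta_n)$ yield the \emph{same} moderate deviation expansion of the hypothesis-testing entropy as $1-\varepsilon_n$ itself. Once (i)--(ii) are secured, the achievability and converse halves of the theorem sandwich $\tfrac{1}{n}\log \ell^{\varepsilon_n}(A^n{\,|\,}E^n)_{\rho^{\otimes n}}$ between matching values, which gives the claimed expansion.

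For the concrete choice, fix a constant $\kappa \in (0,1/3)$, set $\delta_n := \kappa\,\varepsilon_n$, and on the converse side additionally take $c_n := 2/\delta_n$, so that $\delta_n - c_n^{-1} = \delta_n/2 > 0$ satisfies the constraint of Theorem~\ref{theo:one-shot_second-order_PA}. The perturbed smoothing parameters then satisfy $\varepsilon_n - 3\delta_n = (1-3\kappa)\varepsilon_n$ and $\varepsilon_n + 2\delta_n = (1+2\kappa)\varepsilon_n$; writing each as $\mathrm{e}^{-n (a_n^\sharp)^2}$ gives $(a_n^\sharp)^2 = a_n^2 + O(1/n)$, and since $n a_n^2 \to \infty$ this forces $a_n^\sharp = a_n + o(a_n)$. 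Consequently Lemma~\ref{lemm:moderate} yields $\tfrac{1}{n} H_\textnormal{h}^{1-\varepsilon_n \pm O(\delta_n)}(A^n{\,|\,}E^n)_{\rho^{\otimes n}} = H(A{\,|\,}E)_\rho - \sqrt{2 V(A{\,|\,}E)_\rho}\, a_n + o(a_n)$. A direct substitution shows the achievability penalty is $\tfrac{1}{n}\log(\nu/\delta_n^2) = 2 a_n^2 + O(1/n) = o(a_n)$, while the converse penalty satisfies $\tfrac{1}{2n}\log \tfrac{(1+c_n)(\varepsilon_n + c_n^{-1})}{\delta_n (\delta_n - c_n^{-1})} = O(a_n^2) = o(a_n)$ after substituting the chosen $\delta_n,c_n$. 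Plugging these estimates into Theorem~\ref{theo:one-shot_second-order_PA} and absorbing $\tfrac{1}{2}\sqrt{2 V(A{\,|\,}E)_\rho} = \sqrt{V(A{\,|\,}E)_\rho/2}$ yields the first identity of the proposition. The second identity, for $\ell^{1-\varepsilon_n}$, follows by applying exactly the same argument with error parameter $1-\varepsilon_n$ in place of $\varepsilon_n$; the relevant hypothesis-testing entropy is now $H_\textnormal{h}^{\varepsilon_n \pm O(\delta_n)}$, and the second formula of Lemma~\ref{lemm:moderate} supplies the opposite sign in front of $a_n$.

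The main obstacle is precisely step (ii): one must verify that perturbing the smoothing parameter by a multiplicative constant does not alter the moderate deviation expansion of $D_\textnormal{h}$. This is where the divergence condition $n a_n^2 \to \infty$ is essential, as it guarantees that a bounded additive shift of $n a_n^2$ is subsumed in the $o(a_n)$ error term, hence $a_n^\sharp = a_n + o(a_n)$. After that, the remaining estimates are purely algebraic, and the tightness of Theorem~\ref{theo:one-shot_second-order_PA} transfers directly to the moderate deviation regime.
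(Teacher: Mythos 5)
Your proposal is correct and follows exactly the route the paper intends (the paper gives no explicit proof of Proposition~\ref{prop:moderate_QD}, merely invoking Theorem~\ref{theo:one-shot_second-order_PA} together with Lemma~\ref{lemm:moderate}); your choices $\delta_n=\kappa\varepsilon_n$ and $c_n=2/\delta_n$, and the observation that a multiplicative perturbation of $\varepsilon_n$ only shifts $na_n^2$ by an additive constant so that $a_n^\sharp=a_n+o(a_n)$, supply precisely the details the paper omits. The only cosmetic slip is that $\nu$ must be read as $|\textnormal{\texttt{spec}}(\rho_E^{\otimes n})|$, which grows polynomially in $n$ by Lemma~\ref{lemm:pinching}, so the achievability penalty is $2a_n^2+O(\log n/n)$ rather than $2a_n^2+O(1/n)$ --- still $o(a_n)$ since $na_n^2\to\infty$.
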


\section{Proof of Achievability Bound}\label{sec:lower}
In this section we prove the achievability bound of {Theorem}~\ref{theo:one-shot_second-order_PA}.
We first define the randomizing family of channels that will be used in the proof.
	\begin{defn}[Randomizing family of channels \cite{Dup21}] \label{defn:randomizing}
		A family of CPTP maps $\left\{ \mathcal{R}^h_{A\to C} : h \in\mathscr{H} \right\}$ with distribution $p$ on $\mathcal{H}$ is $\lambda$-randomizing if, for any linear bounded operator $X_{AE}$,
		\begin{align}
			\mathds{E}_{h\sim p} \left\| \left( \mathcal{R}^h - \mathcal{U} \right)(X_{AE}) \right\|_2 \leq \lambda \left\|X_{AE} \right\|_2,\notag
		\end{align}
  where a perfectly randomizing channel $\mathcal{U}_{A\to C}$ is defined as
	$
		\mathcal{U}(X_A) = \frac{\mathds{1}_C}{\left|{C}\right|} \cdot \Tr\left[ X_A\right].
	$
	\end{defn}
\begin{remark}\label{lemm:A1universal} 
It follows from \cite[Theorem 3.3]{dupuis2010decoupling} that for the partial trace $\mathcal{T}_{A\to C} = \Tr_{A_1}\otimes \id_C$, the map $\mathcal{T}_{A\to C}(U_A(\cdot)U_A^{\dagger})$ is $|{A}_1|^{-\frac12}$-randomizing. 
\end{remark}

\begin{proof}[Proof of achievability in {Theorem}~\ref{theo:one-shot_second-order_PA}]
		
		We first claim that for any $c$ in the set $  \{c > 0 | \{\mathscr{P}_{\rho_E}(\rho_{AE}) = c\mathds{1}_A\otimes \rho_E\}=0 \}$ and any $\lambda$-randomizing channel $\mathcal{R}^h_{A\to C}$,
		\begin{align}
			\frac{1}{2}\mathds{E}_{h} \left\| \mathcal{R}^h(\rho_{AE})-\mathcal{U}(\rho_{AE}) \right\|_1 \leq \tr\left[\rho_{AE}\left\{ \mathscr{P}_{\rho_E}[\rho_{AE}] > c \mathds{1}_A\otimes \rho_E \right\}\right] + \lambda\sqrt{ c|\textnormal{\texttt{spec}}(\rho_E)|\left|{C}\right| },\label{eq:pmain}
		\end{align}
   where $\mathscr{P}_{\rho_E}[\rho_{AE}]$ is defined as \eqref{eq:pinching}. Note that we must exclude the values of $c$ for which $\{\mathscr{P}_{\rho_E}(\rho_{AE}) = c\mathds{1}_A\otimes \rho_E\} \neq 0 \}$, as the presence of these values would pose difficulties in our subsequent proof when $\rho_{AE}$ is non-invertible. With this restriction, $c$ can still be chosen from all real positive numbers excluding a countable set of numbers.
		Let $\delta \in (0,\varepsilon)$ and let
		\begin{align}
			c &= \exp\left\{D_\text{s}^{1-\varepsilon+ \delta}(\mathscr{P}_{\rho_E}[\rho_{AE}]  \,\|\,\mathds{1}_A\otimes\rho_E) + \xi\right\}\notag
		\end{align}
		for some small $\xi >0$. Then, by definition of the information spectrum divergence, \eqref{eq:Ds}, we have
		\begin{align}
			\tr\left[\rho_{AE}\left\{\mathscr{P}_{\rho_E}[\rho_{AE}]> c\mathds{1}_A\otimes \rho_E\right\}\right] < \varepsilon - \delta\,.\label{eq:pmain0}
		\end{align}
		As shown in {Remark}~\ref{lemm:A1universal}, the map $\mathcal{T}(U(\cdot))$ is $\lambda = |{A}_1|^{-\frac12}= \frac{|{C}|^{\sfrac12}}{|{A}|^{\sfrac12}}$-randomizing. Then, by \eqref{eq:pmain} and \eqref{eq:pmain0},
		we know the $\varepsilon$-secret criterion is satisfied for $|{C}| = \left\lfloor \sqrt{{\frac{|{A}|\delta^{2}}{ c|\texttt{spec}(\rho_E)|}}} \right\rfloor$, i.e.~
		\begin{align}
			\frac{1}{2}\mathds{E}_{h} \left\| \mathcal{R}^h(\rho_{AE})-\mathcal{U}(\rho_{AE}) \right\|_1 \leq \varepsilon.\notag
		\end{align}
  By the definition of maximal remainder dimension \eqref{def:maxlq}, we have the following lower bound on $\log \ell^{\varepsilon}(A{\,|\,}E)_{\rho}$:
		\begin{align}
			\log \ell^{\varepsilon}(A{\,|\,}E)_{\rho}
   &\geq \frac{1}{2}\log |{A}|-\frac12\left(D_\text{s}^{1-\varepsilon+ \delta}(\mathscr{P}_{\rho_E}[\rho_{AE}]  \,\|\,\mathds{1}_A\otimes\rho_E) + \xi\right) - \frac{1}{2}\log  |\texttt{spec}(\rho_E)| + \log \delta \notag\\
			& \geq \frac{1}{2}\left(\log |{A}| + H_\text{h}^{1-\varepsilon + 3\delta}(A{\,|\,}E)_{\rho}\right) - \frac{1}{2}\xi -\log |\texttt{spec}(\rho_E)| + 2\log \delta,\notag
		\end{align}
		where we have used Lemma \ref{lemm:relation} in the last inequality. Since $\xi>0$ is arbitrary, taking $\xi\to 0$ gives our claim of lower bound in Theorem~\ref{theo:one-shot_second-order_PA}.
  
\medskip
		We shall now prove \eqref{eq:pmain}.
	We first consider the case where $\rho_{AE}$ is invertible, since the general situation of $\rho_{AE}$ can be argued by using approximation to an invertible case.
	Let \begin{align}
		&\Pi := \left\{\mathscr{P}_{\rho_E}[\rho_{AE}]\leq c\mathds{1}_A\otimes\rho_E\right\}; \notag\\
		&\Pi^\mathrm{c} := \mathds{1} - \Pi,\notag
	\end{align}
	and denote by $\mathcal{N}^h_{A\to C} := \mathcal{R}^h - \mathcal{U}$.
	The triangle inequality of $\|\cdot\|_1$ implies that
	\begin{align}
		\frac{1}{2}\mathds{E}_h\left\|\mathcal{N}^h(\rho_{AE})\right\|_1 \leq \frac{1}{2}\mathds{E}_h\left\|\mathcal{N}^h(\Pi^{\mathrm{c}}\rho_{AE}\Pi^{\mathrm{c}})\right\|_1
		+ \frac{1}{2}\mathds{E}_h\left\|\mathcal{N}^h(\Pi\rho_{AE}\Pi^{\mathrm{c}})\right\|_1 + \frac{1}{2}\mathds{E}_h\left\|\mathcal{N}^h(\rho_{AE}\Pi)\right\|_1. \label{eq:qdecouple1}
	\end{align}
	Using the monotone decreasing of $\|\cdot \|_1$ under any CPTP map, the first term of \eqref{eq:qdecouple1} can be bounded as
	\begin{align}
		\frac{1}{2}\mathds{E}_{h} \left\|\mathcal{N}^h(\Pi^{\mathrm{c}}\rho_{AE}\Pi^{\mathrm{c}})\right\|_1
		&\leq \frac{1}{2}\mathds{E}_{h} \left\|\mathcal{R}^h(\Pi^{\mathrm{c}}\rho_{AE}\Pi^{\mathrm{c}})\right\|_1+\frac{1}{2}\mathds{E}_{h} \left\|\mathcal{U}(\Pi^{\mathrm{c}}\rho_{AE}\Pi^{\mathrm{c}})\right\|_1\notag\\
		&\leq \mathds{E}_{h} \left\|\Pi^{\mathrm{c}}\rho_{AE}\Pi^{\mathrm{c}}\right\|_1\notag\\
		&= \tr\left[\rho_{AE}\{\mathscr{P}_{\rho_E}[\rho_{AE}]>c\mathds{1}_A\otimes\rho_E\}\right]. \label{eq:one-shot_second-order4}
	\end{align}
The second term of \eqref{eq:qdecouple1} can be bounded by
	\begin{align}
		\frac{1}{2} \mathds{E}_h\left\|\mathcal{N}^h(\Pi\rho_{AE}\Pi^{\mathrm{c}})\right\|_1 &=\frac{1}{2} \mathds{E}_h\left\|\mathcal{N}^h(\Pi^{\mathrm{c}}\rho_{AE}\Pi)\right\|_1\notag\\
		& \overset{\textnormal{(a)}}{\leq} \frac{1}{2} \mathds{E}_h  \sqrt{\Tr\left[ \left(\mathcal{N}^h(\Pi^{\mathrm{c}}\rho_{AE}\Pi)\right)^{\dagger} \mathcal{N}^h(\Pi^{\mathrm{c}}\rho_{AE}\Pi) \left(\omega_C\otimes\rho_E\right)^{-1}\right]}\notag\\
		&= \frac{\sqrt{ \left|{C}\right| }}{2} \mathds{E}_h \left\| \mathcal{N}^h\left(\Pi^{\mathrm{c}}\rho_{AE}\Pi\rho_E^{-\frac12} \right) \right\|_2 \notag\\
		& \overset{\textnormal{(b)}}{\leq} \frac{\lambda\sqrt{ \left|{C}\right| }}{2} \left\|\Pi^{\mathrm{c}}\rho_{AE}\Pi\rho_E^{-\frac12} \right\|_2 \notag\\
		&= \frac{\lambda}{2} \sqrt{\left|{C}\right| \Tr\left[ \Pi \rho_{AE} \Pi^\mathrm{c} \rho_{AE} \Pi \rho_E^{-1}\right]  } \notag\\
		&\overset{\textnormal{(c)}}{\leq} \frac{\lambda}{2} \sqrt{\left|{C}\right| \Tr\left[ \Pi \rho_{AE}^2 \Pi \rho_E^{-1}\right] },\label{eq:one-shot_second-order5}
	\end{align}
	where (a) follows from Lemma~\ref{lemm:Cauchy--Swartz}, 
	(b) is due to the $\lambda$-randomizing channel given in Definition~\ref{defn:randomizing},
	and (c) follows from the operator inequality $\Pi^\mathrm{c} \leq \mathds{1}$.

 Next, note that
	\begin{align}
		\Pi \rho_E^{-1} \Pi &\leq c \, \Pi \left(\mathscr{P}_{\rho_E}[\rho_{AE}]\right)^{-1} \Pi \notag\\
		&\leq c\left(\mathscr{P}_{\rho_E}[\rho_{AE}]\right)^{-1} \notag\\
		&\leq c \left|\textnormal{\texttt{spec}}(\rho_E)\right| (\rho_{AE})^{-1},  \label{eq:one-shot_second-order6}
	\end{align}
	where we invoke the operator anti-monotonicity of matrix inversion together with the pinching inequality (Lemma~\ref{lemm:pinching}). 
	Making use of \eqref{eq:one-shot_second-order5} and \eqref{eq:one-shot_second-order6},
	\begin{align} \label{eq:one-shot_second-order7}
		\frac{1}{2} \mathds{E}_h\left\|\mathcal{N}^h(\Pi\rho_{AE}\Pi^{\mathrm{c}})\right\|_1 \leq \frac{\lambda}{2} \sqrt{ c |{C} | \left|\textnormal{\texttt{spec}}(\rho_E)\right| }.
	\end{align}
Following similar steps, one can also bound the third term of \eqref{eq:qdecouple1} by 
	\begin{align} \label{eq:one-shot_second-order8}
		\frac{1}{2} \mathds{E}_h\left\|\mathcal{N}^h(\rho_{AE}\Pi)\right\|_1 \leq \frac{\lambda}{2} \sqrt{ c |{C} | \left|\textnormal{\texttt{spec}}(\rho_E)\right| }.
	\end{align}
	Then, the inequality \eqref{eq:pmain} follows  by combining  \eqref{eq:qdecouple1}, \eqref{eq:one-shot_second-order4}, \eqref{eq:one-shot_second-order7}, and \eqref{eq:one-shot_second-order8}.

We can now generalize inequality~\eqref{eq:pmain} to all quantum states. 
	For the general non-invertible state $\rho_{AE}$ we define the approximation
	\begin{align}
		\rho_{AE}^\epsilon:= (1-\epsilon)\rho_{AE} + \epsilon\frac{\mathds{1}_A}{|{A}|}\otimes \rho_E.\notag
	\end{align}
	Moreover, since $\mathscr{P}_{\rho_E}$ is unital completely positive and trace-preserving, 
	\begin{align}
		\mathscr{P}_{\rho_E}\left[\rho_{AE}^\epsilon\right]:= (1-\epsilon)\mathscr{P}_{\rho_E}\left[\rho_{AE}\right] + \epsilon \frac{\mathds{1}_A}{|{A}|}\otimes \rho_E.\notag
	\end{align}
	It is clear that
	\begin{align}
		\lim_{\epsilon\to 0}\frac{1}{2}\mathds{E}_{h} \left\| \mathcal{R}^h(\rho_{AE}^\epsilon)-\mathcal{U}(\rho_{AE}^\epsilon) \right\|_1 = \frac{1}{2}\mathds{E}_{h} \left\| \mathcal{R}^h(\rho_{AE})-\mathcal{U}(\rho_{AE}) \right\|_1.\notag
	\end{align}
	For the right-hand side of \eqref{eq:pmain}, we have
	\begin{align}
		\left\|\mathscr{P}_{\rho_E}\left[\rho_{AE}^\epsilon\right] - \mathscr{P}_{\rho_E}[\rho_{AE}]\right\|_1 \leq 2\epsilon\notag.
	\end{align}
	Furthermore, note that with our chosen value for $c$, $\{\mathscr{P}_{\rho_E}(\rho_{AE}) = c\mathds{1}_A\otimes \rho_E\}=0$. It is then easy to prove that for a small enough $\epsilon$, the following holds
	\begin{align}
		\left\{ \mathscr{P}_{\rho_E}[(\rho_{AE})^{\epsilon}] > c \mathds{1}_A\otimes (\rho_E)^{\epsilon} \right\} = \left\{ \mathscr{P}_{\rho_E}[\rho_{AE}] > c \mathds{1}_A\otimes \rho_E \right\}.\notag
	\end{align}
	In fact, because $\mathscr{P}_{\rho_E}[(\rho_{AE})]$ and $\mathds{1}_A\otimes \rho_E$ are commutative, they can be viewed as functions on the finite set of spectrum. Therefore, we have
	\begin{align}
		\lim_{\epsilon\to 0} \tr\left[(\rho_{AE})^{\epsilon}\left\{ \mathscr{P}_{\rho_E}[(\rho_{AE})^{\epsilon}] > c \mathds{1}_A\otimes (\rho_E)^{\epsilon} \right\}\right]\notag = \tr\left[\rho_{AE}\left\{ \mathscr{P}_{\rho_E}[\rho_{AE}] > c \mathds{1}_A\otimes \rho_E \right\}\right],\notag
	\end{align}
	which proves \eqref{eq:pmain} by approximation.
 
\end{proof}

\section{Proof of Converse Bound}\label{sec:converse}
In this section we prove the converse bound of {Theorem}~\ref{theo:one-shot_second-order_PA}.
\begin{proof}[Proof of converse in {Theorem}~\ref{theo:one-shot_second-order_PA}]
	
Consider the partial trace channel $\mathcal{T}_{A\to C}=\tr_{A_1}\otimes \id_C$. Note that $|{A}| = |{A}_1||{C}|$. Denote $\rho^U_{AE} = U_A\rho_{AE}U_A^{\dagger}$. We have,
\begin{align}
\varepsilon  &\geq\int_{\mathds{U}({A})} \left\|\mathcal{T}_{A\to C}(U_A\rho_{AE}U_A^{\dagger})-\omega_{C}\otimes\rho_E\right\|_1 \,\mathrm{d} U\notag
\\
& =\mathds{E}_U \left\|\Tr_{A_1}[\rho_{AE}^U]-\omega_{C}\otimes\rho_E\right\|_1 \notag
\\
& \overset{(a)}{\geq} \mathds{E}_U\Tr\bigg[\left(\Tr_{A_1}[\rho_{AE}^U]- \omega_{C}\otimes \rho_E\right) \frac{\Tr_{A_1}[\rho_{AE}^U]}{\Tr_{A_1}[\rho_{AE}^U]+ c\omega_C\otimes\rho_E}\bigg]\notag\\
& \overset{(b)}{\geq} \mathds{E}_U\Tr\left[\Tr_{A_1}[\rho_{AE}^U]\frac{\Tr_{A_1}[\rho_{AE}^U]}{\Tr_{A_1}[\rho_{AE}^U]+ c\omega_C\otimes\rho_E}\right]- c^{-1}\notag
\\
&=\mathds{E}_U\Tr\left[\rho_{AE}^U \frac{\mathds{1}_{A_1}\otimes\Tr_{A_1}[\rho_{AE}^U]}{\mathds{1}_{A_1}\otimes(\Tr_{A_1}[\rho_{AE}^U]+ c\omega_C\otimes\rho_E)}\right]- c^{-1}\notag\\
& \overset{(c)}{\geq}  \mathds{E}_U\e^{H_{\min}(A_1|CE)_{\rho^U}} \Tr\left[\rho_{AE}^U\frac{\rho_{AE}^U}{\mathds{1}_{A_1}\otimes(\Tr_{A_1}[\rho_{AE}^U]+ c\omega_C\otimes\rho_E)}\right] - c^{-1}\notag\\
& \overset{(d)}{\geq} \frac{1}{|\mathsf{A}_1|}  \mathds{E}_U\Tr\left[\rho_{AE}^U\frac{\rho_{AE}^U}{\mathds{1}_{A_1}\otimes(\Tr_{A_1}[\rho_{AE}^U]+ c\omega_C\otimes\rho_E)}\right] - c^{-1}\notag\\
& = \frac{1}{|\mathsf{A}_1|}\mathds{E}_U \exp D^*_2\left(\rho^U_{AE}\big{|}\big{|}\mathds{1}_{A_1}\otimes (\Tr_{A_1}[\rho^U_{AE}]+c\omega_C\otimes \rho_E)\right)- c^{-1}.\notag
\end{align}
where in (a) we use {Lemma}~\ref{lemm:1norm} with $\Pi = \frac{\Tr_{A_1}[\rho_{AE}^U]}{\Tr_{A_1}[\rho_{AE}^U]+ c\omega_C\otimes\rho_E}$, the inequality (c) is followed by the definition of $H_{\min}$, in (d) $H_{\min}(A_1|CE) \geq -\log(\text{rank}(\rho_{A_1}))\geq -\log|{A}_1|$ is used, and in (b) we apply the following inequality
\begin{align}
\mathds{E}_U\Tr\left[\omega_{C}\otimes \rho_E\frac{\Tr_{A_1}[\rho_{AE}^U]}{\Tr_{A_1}[\rho_{AE}^U]+ c\omega_C\otimes\rho_E}\right] 
& = c^{-1}\mathds{E}_{U}\Tr\left[c\omega_{C}\otimes \rho_E\frac{\Tr_{A_1}[\rho_{AE}^U]}{\Tr_{A_1}[\rho_{AE}^U]+ c\omega_C\otimes\rho_E}\right]\notag\\
& \leq c^{-1}.\notag
\end{align}

Next, let $\sigma^U_{CE} = \Tr_{A_1}[\rho^U_{AE}]+ c\omega_C\otimes \rho_E$. We change the position of unitary operator and expectation
\begin{align}
\mathds{E}_U \exp D^*_2\left(\rho^U_{AE}\Vert\mathds{1}_{A_1}\otimes \sigma^U_{CE}\right)
& = \mathds{E}_U \exp D^*_2\left(U_A\rho_{AE}U_A^\dagger\Vert\mathds{1}_{A_1}\otimes \sigma^U_{CE}\right)\notag\\
& \geq \mathds{E}_U \exp D^*_2\left(\rho_{AE}\Vert U_A^{\dagger}(\mathds{1}_{A_1}\otimes \sigma^U_{CE})U_A\right)\notag\\
& \geq  \exp D^*_2\left(\rho_{AE}\big{|}\big{|}\mathds{E}_U U_A^{\dagger}\left(\mathds{1}_{A_1}\otimes \sigma^U_{CE}\right)U_A\right).\notag
\end{align}
The first inequality is followed by \cite[Theorem~1 (Data processing)]{BEI17}, and in the second inequality we use the joint convexity of the operation stated in Lemma \ref{lemm:joint_convexity}.

Using Lemma \ref{expectUAE}, which is stated after the proof, by setting $A_2 = C$ and $\sigma_{AE} = \rho_{AE} + c\frac{\mathds{1}_A}{|{A}|}\otimes\rho_E$,
\begin{align}
\mathds{E}_U U_A^{\dagger}\left(\mathds{1}_{A_1}\otimes \sigma^U_{CE}\right)U_A 
& = \mathds{E}_U U_A^{\dagger}\left(\mathds{1}_{A_1}\otimes \left(\Tr_{A_1}\left[U_A\sigma_{AE}U_A^{\dagger}\right]\right)\right)U_A \notag\\
&= \alpha (1+c)\mathds{1}_A\otimes \rho_E + \beta\left(\rho_{AE} + c\frac{\mathds{1}_A}{|A|}\otimes \rho_E\right)\notag \\
& = \beta\rho_{AE} + \left(\alpha(1+c)+ c\frac{\beta}{|A|}\right)\mathds{1}_A\otimes \rho_E.\notag
\end{align}
where $\alpha = \frac{|{A}||{A}_1|-|{C}|}{|{A}|^2-1} = \frac{1}{|{C}|}\frac{|{A}|^2-|{C}|^2}{|{A}|^2-1}$ and $\beta = \frac{|{A}||{C}|-|{A}_1|}{|{A}|^2-1} = \frac{|{C}|}{|{A}|}\frac{|{A}|^2-\frac{|{A}|^2}{|{C}|^2}}{|{A}|^2-1}$.

Combining above equations,
\begin{align}
\varepsilon & \geq \frac{1}{|{A}_1|}  \e^{D^*_2\left(\rho_{AE}\big{|}\big{|}\beta\rho_{AE} + \left(\alpha(1+c)+ c\frac{\beta}{|{A}|}\right)\mathds{1}_A\otimes \rho_E\right)}- c^{-1}\notag\\
& \geq \frac{|{C}|}{|{A}|} (\delta+ \varepsilon)\cdot \bigg(\beta + \bigg(\alpha(1+c)+ c\frac{\beta}{|{A}|}\bigg)\e^{-D_s^{1-\varepsilon-\delta}(\rho_{AE}\Vert\mathds{1}_A\otimes \rho_E)}\bigg)^{-1} - c^{-1},\notag
\end{align}
where in the second inequality we use Lemma \ref{lemm:D2toDs} by setting $\lambda_1 = \beta$, $\lambda_2 = (\alpha(1+c) + \frac{c\beta}{|{A}|})$, and $\eta = 1-\varepsilon - \delta$. 
Rearranging the terms, 
\begin{align}
\frac{|{A}|}{|{C}|}\beta + \frac{|{A}|}{|{C}|}\left(\alpha(1+c)+ c\frac{\beta}{|{A}|}\right)\e^{-D_s^{1-\varepsilon-\delta}(\rho_{AE}\Vert\mathds{1}_A\otimes \rho_E)}
\geq \frac{\delta + \varepsilon}{\varepsilon + c^{-1}}.\notag
\end{align}
Since $\frac{|{A}|}{|{C}|}\beta \leq 1$, $c \leq 1+c$, and $\alpha + \frac{\beta}{|{A}|} = \frac{1}{|{C}|}$,
\begin{align}
\log\frac{|{A}|}{|{C}|^2} + \log(1+c) - D^{1-\varepsilon-\delta}_s(\rho_{AE}\big{|}\big{|}\mathds{1}_A\otimes\rho_E) \geq \log\left(\frac{\delta - c^{-1}}{\varepsilon+ c^{-1}}\right).\label{eq:conversepf}
\end{align}
Applying {Lemma}~\ref{lemm:relation} on \eqref{eq:conversepf} and we complete the proof.
\end{proof}

\begin{lemm}\label{expectUAE}
Let $|{A}|= |{A}_1||{A}_2|$ and $U_A \in \mathds{U}({A})$ be chosen from the normalized Haar measure. For all $\sigma_{AE} \in \mathcal{L}({A}\otimes {E})$,
\begin{align}
\mathds{E}_{U} U_A^{\dagger}\big(\mathds{1}_{A_1}\otimes \Tr_{A_1}\left[U_A\sigma_{AE} U_A^{\dagger}\right]\big)U_A = \alpha \mathds{1}_A\otimes\sigma_E + \beta \sigma_{AE},\notag
\end{align}
where $\alpha = \frac{|{A}||{A}_1|-|{A}_2|}{|{A}|^2-1}$ and $\beta = \frac{|{A}||{A}_2|-|{A}_1|}{|{A}|^2-1}$.
\end{lemm}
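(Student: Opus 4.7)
The plan is to reduce the Haar integral to a one-system Schur-type calculation and then fix two scalar coefficients via two convenient test inputs. Since the randomness only acts on $A$, define
\[
\Psi(Y_A) := \mathds{1}_{A_1} \otimes \Tr_{A_1}[Y_A], \qquad T_A(X_A) := \mathds{E}_U\, U_A^\dagger\, \Psi\bigl(U_A X_A U_A^\dagger\bigr)\, U_A,
\]
so that the left-hand side of the lemma is precisely $(T_A \otimes \id_E)(\sigma_{AE})$. It therefore suffices to determine $T_A$ on $\mathcal{L}({A})$.

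First, I would observe that $T_A$ is $\mathds{U}({A})$-covariant: for any $V \in \mathds{U}({A})$, the substitution $U \mapsto UV^\dagger$ in the Haar integral immediately gives $T_A(V X V^\dagger) = V T_A(X) V^\dagger$. Schur's lemma, applied to the adjoint action of $\mathds{U}({A})$ on $\mathcal{L}({A})$ which decomposes as the trivial representation (spanned by $\mathds{1}_A$) plus the irreducible traceless piece, then forces
\[
T_A(X) = a\, X + b\, \Tr[X]\,\mathds{1}_A
\]
for some scalars $a, b$ depending only on $|{A}_1|$ and $|{A}_2|$.

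Next, I fix $a$ and $b$ by evaluating on two test inputs. Taking $X = \mathds{1}_A$, the conjugation is trivial and $T_A(\mathds{1}_A) = \mathds{1}_{A_1} \otimes \Tr_{A_1}[\mathds{1}_A] = |{A}_1|\,\mathds{1}_A$, yielding $a + b\,|{A}| = |{A}_1|$. Taking $X = |\phi\rangle\langle\phi|$ a unit pure state, the diagonal matrix element $\Tr[T_A(X) X] = a + b$ also equals
\[
\mathds{E}_U \Tr\bigl[(\Tr_{A_1}[U_A |\phi\rangle\langle\phi| U_A^\dagger])^2\bigr] = \frac{|{A}_1| + |{A}_2|}{|{A}| + 1},
\]
the classical Lubkin--Page average-purity formula, which one verifies by rewriting the square via the swap trick (Lemma~\ref{lemm:swap}) as $\Tr\bigl[(|\phi\rangle\langle\phi|_A \otimes |\phi\rangle\langle\phi|_{A'})(\mathds{1}_{A_1 A_1'} \otimes F_{A_2 A_2'})\bigr]$, averaging, and invoking the $U^{\otimes 2}$-twirl formula whose image lies in $\mathrm{span}\{\mathds{1}_{AA'}, F_{AA'}\}$ with explicit coefficients determined by tracing against $\mathds{1}$ and $F$. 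Solving the resulting $2 \times 2$ linear system gives $a = (|{A}||{A}_2| - |{A}_1|)/(|{A}|^2 - 1) = \beta$ and $b = (|{A}||{A}_1| - |{A}_2|)/(|{A}|^2 - 1) = \alpha$. Applying $T_A \otimes \id_E$ to $\sigma_{AE}$ and using $\mathds{1}_A \otimes \Tr_A[\sigma_{AE}] = \mathds{1}_A \otimes \sigma_E$ then yields $\beta\, \sigma_{AE} + \alpha\, \mathds{1}_A \otimes \sigma_E$, as claimed.

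The main obstacle is the average-purity evaluation, which is the only nontrivial input; everything else amounts to Schur's lemma plus a $2 \times 2$ linear solve. One could alternatively bypass the representation-theoretic detour by inserting a fresh copy $A'$, rewriting $\mathds{1}_{A_1} \otimes \Tr_{A_1}[\cdot]$ directly via the swap $F_{A_1 A_1'}$, and applying the $U^{\otimes 2}$-twirl formula to extract both $\alpha$ and $\beta$ simultaneously; however, the covariance-plus-two-test-inputs route above appears cleaner.
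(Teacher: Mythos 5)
Your proof is correct, and it takes a genuinely different route from the paper's. The paper works at the level of operators on a doubled space: it rewrites $\mathds{1}_{A_1}\otimes\Tr_{A_1}[\,\cdot\,]$ through its Choi operator (Lemma~\ref{Choi}), reduces the Haar average to the two-copy twirl $\mathds{E}_U\, U_{A'}^{\dagger}\otimes U_A^{\dagger}\left(\mathds{1}_{A_1'A_1}\otimes\mathcal{F}_{A_2}\right)U_{A'}\otimes U_A=\alpha\,\mathds{1}+\beta\,\mathcal{F}_A$ (citing Lemma~3.4 of \cite{FRE10}, with $\alpha,\beta$ fixed by tracing against $\mathds{1}$ and $\mathcal{F}_A$), and then converts back with the swap trick. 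You instead stay at the level of the superoperator $T_A$: covariance under conjugation plus Schur's lemma on the adjoint action (trivial part plus irreducible traceless part) forces $T_A(X)=aX+b\Tr[X]\mathds{1}_A$, and the two coefficients are pinned down by the trivial input $X=\mathds{1}_A$ and by the Lubkin--Page average purity $\mathds{E}_U\Tr[(\Tr_{A_1}[U|\phi\rangle\langle\phi|U^{\dagger}])^2]=\frac{|{A}_1|+|{A}_2|}{|{A}|+1}$; I checked that your $2\times2$ system indeed yields $a=\beta$ and $b=\alpha$ as in the lemma, and the extension to $\mathcal{L}({A}\otimes{E})$ via $T_A\otimes\id_E$ is immediate by linearity. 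The two arguments are cousins --- your purity evaluation is itself a second-moment twirl, so both ultimately rest on the commutant of $U^{\otimes2}$ being spanned by $\mathds{1}$ and the swap --- but your packaging avoids the Choi-operator bookkeeping entirely and makes the structural form of the answer transparent before any computation, at the cost of importing (or re-deriving) the Page formula. Either is acceptable; if you want your version fully self-contained, spell out the twirl coefficients in the purity computation rather than quoting the formula by name.
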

\begin{proof}[Proof of Lemma~\ref{expectUAE}]
Let $\tau_{A_1}(\cdot) = \Tr_{A_1}(\cdot)\mathds{1}_{A_1}$.  For a system $X$ and its copy $X'$, denote 
\[{\Gamma_{X'X} := |\mathcal{X}|\Phi_{X'X} = \sum_{i,j\in \mathcal{X}} E_{i,j}^{X'}\otimes E_{i,j}^{X}}
\]
as the non-normalized maximally entangled state. Besides, let $\mathcal{F}_{X}$ be the swap operator between the system and its duplicated space $X'$. Note that
\begin{align}
 &U_A^{\dagger}\Big(\mathds{1}_{A_1}\otimes \Tr_{A_1}\left[U_A\sigma_{AE} U_A^{\dagger}\right]\Big)U_A \notag\\
 &= U_A^{\dagger} \left(\tau_{A_1}\otimes \id_{A_2}\otimes \id_E \left(U_A \sigma_{AE} U_A^{\dagger}\right)\right) U_A \notag\\
 & = U_A^{\dagger} \Tr_{A'_1A'_2E'}\bigg[\left(\mathds{1}_{A'_1A_1}\otimes \Gamma_{A'_2A_2}\otimes \Gamma_{E'E}\right)^{T_{A'_1A'_2E'}}\left(U_{A'}\sigma_{A'E'}U_{A'}^{\dagger}\otimes \mathds{1}_{A_1A_2E}\right)\bigg]U_A \notag\\
 & = \Tr_{A'E'}\bigg[U_{A'}^{\dagger}\otimes U_{A}^{\dagger} \left(\mathds{1}_{A'_1A_1}\otimes \Gamma_{A'_2A_2}\otimes \Gamma_{E'E}\right)^{T_{A'_1A'_2E'}} U_{A'}\otimes U_{A} \left(\sigma_{A'E'}\otimes \mathds{1}_{AE}\right)\bigg]\notag\\
 & = \Tr_{A'E'}\big[U_{A'}^{\dagger}\otimes U_{A}^{\dagger} \left(\mathds{1}_{A'_1A_1}\otimes \mathcal{F}_{A_2} \otimes \mathcal{F}_{E}\right)  U_{A'}\otimes U_{A} \left(\sigma_{A'E'}\otimes \mathds{1}_{AE}\right)\big].\label{transform2space}
\end{align}
The second equation follows from {Lemma}~\ref{Choi} and the fact that the Choi operator of $\tau_{A_1}\otimes \id_{A_2}\otimes \id_E$ is $\mathds{1}_{A'_1A_1}\otimes \Gamma_{A'_2A_2}\otimes \Gamma_{E'E}$.

Using Lemma 3.4 in \cite{FRE10},
\begin{align}
\mathds{E}_{U} U_{A'}^{\dagger}\otimes U_{A}^{\dagger} \left(\mathds{1}_{A'_1A_1}\otimes \mathcal{F}_{A_2} \right)U_{A'}\otimes U_{A} = \alpha \mathds{1}_{A'}\otimes \mathds{1}_{A} + \beta \mathcal{F}_A,\label{2spaceexpect}
\end{align}
where
\begin{align}
\begin{cases}
    \alpha |{A}|^2 + \beta |{A}| = \Tr[\mathds{1}_{A'_1A_1}\otimes \mathcal{F}_{A_2}]=|{A}_1|^2 |{A}_2| \\
    \alpha|{A}| + \beta |{A}|^2 = \Tr[(\mathds{1}_{A'_1A_1}\otimes \mathcal{F}_{A_2})\mathcal{F}_A] = |{A}_1| |{A}_2|^2,\notag
\end{cases}
\end{align}
or equivalently $\alpha = \frac{|{A}||{A}_1|-|{A}_2|}{|{A}|^2-1}$ and $\beta = \frac{|{A}||{A}_2|-|{A}_1|}{|{A}|^2-1}$.

Thus, using \eqref{transform2space} and \eqref{2spaceexpect}, 
\begin{align}
&\mathds{E}_{U} \Big(U_A^{\dagger}\mathds{1}_{A_1}\otimes \Tr_{A_1}\left[U_A\sigma_{AE} U_A^{\dagger}\right]U_A \Big)\notag\\
& = \mathds{E}_{U} \Tr_{A'E'}\big[U_{A'}^{\dagger}\otimes U_{A}^{\dagger} \left(\mathds{1}_{A'_1A_1}\otimes \mathcal{F}_{A_2} \otimes \mathcal{F}_{E}\right) U_{A'}\otimes U_{A} \left(\sigma_{A'E'}\otimes \mathds{1}_{AE}\right)\big]\notag\\
& = \Tr_{A'E'}\bigg[ \left(\mathds{E}_{U}U_{A'}^{\dagger}\otimes U_{A}^{\dagger} \left(\mathds{1}_{A'_1A_1}\otimes \mathcal{F}_{A_2} \right)U_{A'}\otimes U_{A}\right) \otimes \mathcal{F}_E\left(\sigma_{A'E'}\otimes \mathds{1}_{AE}\right)\bigg]\notag\\
& = \Tr_{A'E'}\left[ (\alpha \mathds{1}_{A'}\otimes \mathds{1}_{A} + \beta \mathcal{F}_A)\otimes \mathcal{F}_E\left(\sigma_{A'E'}\otimes \mathds{1}_{AE}\right)\right]\notag\\
& = \alpha \mathds{1}_A\otimes \sigma_E + \beta \sigma_{AE}\notag
\end{align}
where in the last equality we use Lemma~\ref{lemm:swap}, and the proof is done.
\end{proof}

\section{Application: Entanglement Distillation}\label{sec:distill}
Our result of quantum decoupling can be applied to entanglement distillation. As mentioned in the introduction, we consider the case when Alice and Bob only apply local operations and one-way classical communications $\Pi_{1-\text{LOCC}}$.
Using the achievability bound of Theorem~\ref{theo:one-shot_second-order_PA}, we have the following lower bound of 
maximal distillable entanglement $E_{\textnormal{ED}}^{\varepsilon}$.
\begin{theo}\label{theo:EA}
For any bipartite state $\rho_{AB}$, $\varepsilon \in (0,1)$, and $\delta \in (0, \frac{\varepsilon}{3})$, there exists an entanglement distillation protocol such that, 
\begin{align}
			\log E_{\textnormal{ED}}^{\varepsilon}(\rho) \geq H_\textnormal{h}^{1-\varepsilon + 3\delta}(A{\,|\,}B)_{\rho}  -\log\frac{\nu^2}{\delta^4}\notag
\end{align}
where $\nu =  |\textnormal{\texttt{spec}}(\rho_E)|$, ${\rho}_{ABE}$ is the purification of $\rho_{AB}$. Also, 
\begin{align}
\log E_{\textnormal{ED}}^{\varepsilon}(\rho^{\otimes n}) \geq nH(A|B)_{\rho} + \sqrt{nV(A|B)_{\rho}}\Phi^{-1}(\varepsilon)  + O(\log n).\notag
\end{align}
\end{theo}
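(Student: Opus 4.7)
My plan is to obtain Theorem~\ref{theo:EA} as a corollary of the decoupling achievability bound in Theorem~\ref{theo:one-shot_second-order_PA} applied to the purification of $\rho_{AB}$, combined with Uhlmann's theorem to convert the decoupling estimate into an entanglement-distillation protocol, and then Lemma~\ref{lemm:second} for the i.i.d.\ second-order expansion.

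First, I introduce a purification $|\rho\rangle_{ABE}$ of $\rho_{AB}$ and apply Theorem~\ref{theo:one-shot_second-order_PA} to the bipartite marginal $\rho_{AE}$. A standard Markov argument on the Haar-averaged decoupling error produces a concrete unitary $U_A$ and a splitting $A=A_1 C$ satisfying
$\log|C|\geq \tfrac12\bigl(\log|A|+H_\mathrm{h}^{1-\varepsilon+3\delta}(A{\,|\,}E)_\rho\bigr)-\log(\nu/\delta^{2})$
and
\[
\bigl\|\mathcal{T}_{A\to C}(U_A\rho_{AE}U_A^\dagger)-\omega_C\otimes\rho_E\bigr\|_1\leq 2\varepsilon.
\]
Setting $|\tilde\rho\rangle_{A_1 C B E}:=(U_A\otimes \mathds{1}_{BE})|\rho\rangle_{ABE}$, its $CE$-marginal is precisely the state on the left, and by Fuchs--van~de~Graaf it has fidelity $\geq 1-\varepsilon$ with $\omega_C\otimes\rho_E$. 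Choosing the purification $|\Phi^{|C|}\rangle_{CC'}\otimes|\rho_E\rangle_{EE'}$ of $\omega_C\otimes\rho_E$ (with $C'\subset B$ and $|\rho_E\rangle_{EE'}$ any purification of $\rho_E$), Uhlmann's theorem furnishes an isometry $W\colon A_1B\to C'E'$ with
\[
F\bigl((\mathds{1}_{CE}\otimes W)|\tilde\rho\rangle,\;|\Phi^{|C|}\rangle_{CC'}\otimes|\rho_E\rangle_{EE'}\bigr)\geq 1-\varepsilon.
\]

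To realise $W$ by one-way LOCC, I dilate it to a unitary on $A_1B$, let Alice measure $A_1$ in the resulting computational basis and transmit the outcome $k$ to Bob, who then applies the conditional Bob-local correcting isometry. Data-processing of the fidelity under the partial trace over $EE'$ yields an Alice--Bob state with $F(\cdot,\Phi^{|C|})\geq 1-\varepsilon$, so $\log E_{\textnormal{ED}}^{\varepsilon}(\rho)\geq\log|C|$. To pass to the cleaner form in the statement, I invoke the universal bound $H_\mathrm{h}^{1-\varepsilon+3\delta}(A{\,|\,}E)_\rho\leq\log|A|$ (obtained by the test operator $T=\mathds{1}$ in the SDP defining $D_\mathrm{h}$) to absorb $\tfrac12\log|A|$ into an extra additive $\log(\nu/\delta^{2})$ slack, yielding
\[
\log E_{\textnormal{ED}}^{\varepsilon}(\rho)\geq H_\mathrm{h}^{1-\varepsilon+3\delta}(A{\,|\,}E)_\rho -\log\tfrac{\nu^{2}}{\delta^{4}},
\]
which is the stated one-shot inequality (after aligning the $A|B$/$A|E$ notation via the pure-state purification).

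For the i.i.d.\ bound I substitute $\rho^{\otimes n}$ and pick $\delta=\Theta(n^{-1/2})$; by the pinching inequality $\nu\leq(n+1)^{d-1}$, the loss term $-\log(\nu^{2}/\delta^{4})$ is absorbed into an $O(\log n)$ remainder, and Lemma~\ref{lemm:second} expands $H_\mathrm{h}^{1-\varepsilon\pm\delta}(A^n{\,|\,}E^n)_{\rho^{\otimes n}}=nH(A{\,|\,}E)_\rho+\sqrt{nV(A{\,|\,}E)_\rho}\,\Phi^{-1}(\varepsilon)+O(\log n)$, matching the stated Gaussian second-order form. The main obstacle is the one-way LOCC realisation: the Uhlmann isometry $W$ is jointly defined on $A_1B$ and need not factor across Alice and Bob, so one must exploit the freedom in the choice of Uhlmann representative together with the specific random-unitary structure of $U_A$ to ensure that, after the partial trace over $EE'$, the required Bob correction depends on Alice only through a single classical message---a point that is folklore for the FQSW/mother protocol but deserves explicit verification in the present trace-distance decoupling setting.
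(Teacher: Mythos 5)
Your high-level ingredients (decoupling of the purification, Uhlmann's theorem, one-way classical communication, Lemma~\ref{lemm:second} for the i.i.d.\ expansion) are the right ones, but the route you take through them has a genuine gap, and it is exactly the point you defer in your last sentence. If you decouple with the partial-trace map $\mathcal{T}_{A\to C}=\Tr_{A_1}$ as in Theorem~\ref{theo:one-shot_second-order_PA}, Uhlmann's theorem produces an isometry $W\colon A_1B\to C'E'$ acting \emph{jointly} on the traced-out system $A_1$ and Bob's system $B$, and this isometry cannot in general be realised by Alice measuring $A_1$ and Bob applying an outcome-conditioned local correction. This is not a folklore technicality that merely ``deserves verification'': if it were true, your intermediate claim $\log E_{\textnormal{ED}}^{\varepsilon}(\rho)\geq\log|C|\approx\frac{1}{2}\left(\log|A|+H(A|E)_\rho\right)$ would hold, and that quantity exceeds the distillable entanglement for, e.g., the separable state $\frac{1}{2}(|00\rangle\langle 00|+|11\rangle\langle 11|)_{AB}$, for which $H(A|E)_\rho=0$ and no entanglement is distillable while $\frac{1}{2}\log|A|>0$. (In the fully quantum FQSW/mother protocol the rate $\frac{1}{2}(\log|A|+H(A|E))$ is paid for by sending $A_1$ down a quantum channel; with only classical communication it is unattainable.) Your subsequent weakening via $H_\mathrm{h}^{1-\varepsilon+3\delta}(A|E)_\rho\leq\log|A|$ changes the number you write down but not the protocol, so it does not repair the argument.

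The paper avoids this by not using the partial trace at all in the distillation proof. It applies the core estimate \eqref{eq:pmain} to a different randomizing family: the instrument $\mathcal{T}_{A\to A_1X}(\cdot)=\sum_x\mathcal{V}^x_{A\to A_1}(P^x\cdot P^x)\otimes|x\rangle\langle x|_X$, which is $|X|^{-1/2}$-randomizing, so that the entire output $A_1X$ is decoupled from $E$ and the achievable quantum dimension satisfies $\log|A_1|\geq H_\mathrm{h}^{1-\varepsilon+3\delta}(A|E)_\rho-\log(\nu^2/\delta^4)$ directly, with no $\frac{1}{2}\log|A|$ term to remove. The crucial payoff is structural: conditioned on each classical outcome $x$, the post-measurement state on $A_1BE$ is pure with $A_1E$-marginal close to $\frac{\mathds{1}_{A_1}}{|A_1|}\otimes\rho_E$, so Uhlmann's theorem can be invoked \emph{separately for each $x$}, producing an isometry $\mathcal{K}^x_{B\to B_1\bar{B}}$ that acts on Bob's system alone; transmitting $x$ classically then makes the correction genuinely one-way LOCC. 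To fix your proof you would have to replace the partial-trace decoupling by this measurement-based decoupling (and re-derive the $\sqrt{c\nu|A_1|}$ error term from \eqref{eq:pmain} with $\lambda=|X|^{-1/2}$), which is precisely what Section~\ref{sec:distill} does.
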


\begin{remark}
For maximal distillable entanglement, \cite[Proposition~21]{WTB17} gives the following one-shot achievability bound when 1-LOCC operation is allowed during the task:
\begin{align}
\log E_{\textnormal{ED}}^{\varepsilon}(\rho) \gtrsim -H_{\max}^{\sqrt{2\varepsilon- \varepsilon^2}}(A|B)_{\rho}.\notag
\end{align}
Here $H_{\max}^{\varepsilon}(A|B)_{\rho}$ is the smooth conditional max-entropy of $A$ given $B$. \cite[Theorem~7]{FWTD19} expanded the above one-shot bound to second order expansion as follows:
\begin{align}
\log E_{\textnormal{ED}}^{\varepsilon}(\rho^{\otimes n})\geq nH(A|B)_{\rho} + \sqrt{nV(A|B)_{\rho}}\Phi^{-1}(2\varepsilon - \varepsilon^2)+ O(\log n)\notag,
\end{align} 
where ${\rho}_{ABE}$ is the purification of $\rho_{AB}$. Such bound is tighter than ours in second order rate. The main reason of our bound not being optimal comes from that we bound the purified distance $F_{\textnormal{ED}}$ by trace distance estimate from our quantum decoupling result. The transfer between these two different distances subsume the tightness of our bound. Nonetheless, via our result of quantum decoupling, we still obtain an achievability bound without smoothing that leads to the same first-order rate in their result. 
\end{remark}

\begin{proof}[Proof of Theorem~\ref{theo:EA}]
Let $\rho_{ABE}$ be a purification of $\rho_{AB}$. Define
\begin{align}
\mathcal{T}_{A\to A_1 X}(\cdot) := \sum_{x\in \mathcal{X}} \mathcal{V}^x_{A\to A_1}(P^x_A(\cdot)P^x_A)\otimes \ket{x}\bra{x}_{X},
\end{align}
where $\{P^x_A\}$ is a set of projectors such that $\sum_x P^x_A = \mathds{1}_A$ and for each $x$, $\mathcal{V}^x_{A\to A_1}$ is an isometry
that embeds the subspace of $P^x_A$ onto $A_1$.
It can be shown that the map $\mathcal{T}_{A\to A_1 X}(U_A(\cdot)U_A^{\dagger})$ is $\frac{1}{\sqrt{|{X}|}}$ randomizing by applying part of the steps in the proof of \cite[Theorem 3.7]{dupuis2010decoupling}. By equation \eqref{eq:pmain} (which is the core of achievability part of Theorem~\ref{theo:one-shot_second-order_PA}), for any $c$ in the set $  \{c > 0 | \{\mathscr{P}_{\rho_E}(\rho_{AE}) = c\mathds{1}_A\otimes \rho_E\}=0 \}$, there exists an unitary operator $U_A$ such that
\begin{align}
&\frac{1}{2} \left\| \mathcal{T}_{A\to A_1 X_A}(U_A\rho_{AE}U_A^{\dagger})-\frac{\mathds{1}_X}{|\mathcal{X}|}\otimes \frac{\mathds{1}_{A_1}}{|{A}_1|}\otimes \rho_{E} \right\|_1 \\
&\leq \tr\left[\rho_{AE}\left\{ \mathscr{P}_{\rho_E}[\rho_{AE}] > c \mathds{1}_A\otimes \rho_E \right\}\right] + \frac{1}{\sqrt{|{X}|}}\sqrt{ c|\textnormal{\texttt{spec}}(\rho_E)|{X}||{A}_1| }.\label{eq:distill_1}
\end{align}
Define
$\sigma^{U}_{A_1 BE X} := \mathcal{T}_{A\to A_1 X}(U_A\rho_{ABE}U_A^{\dagger}) 
= \sum_{x\in \mathcal{X}}p_X(x)\ket{x}\bra{x}_{X} \otimes \sigma^x_{A_1 B E}$.
The fidelity has the following relation:
\begin{align}
 F\left(\sigma^{U}_{A_1 E X_A}, \frac{\mathds{1}_X}{|{X}|}\otimes \frac{\mathds{1}_{A_1}}{|{A}_1|}\otimes \rho_{E}\right)
 & \overset{(a)}{=} \sum_{x\in {X}}\sqrt{p_X(x)\frac{1}{|{X}|}} F(\sigma^x_{A_1 E}, \frac{\mathds{1}_{A_1}}{|{A}_1|}\otimes \rho_{E})\notag\\
 & \overset{(b)}{=} \sum_{x\in {X}}\sqrt{p_X(x)\frac{1}{|{X}|}} F(\mathcal{K}^x_{B\to B_1\Bar{B}}(\sigma^x_{A_1 B E}), \Phi_{A_1 B_1}\otimes \tau_{\Bar{B}E})\notag\\
 & \overset{(c)}{=} F\bigg(\sum_x p_X(x)\ket{x}\bra{x}_{X}\otimes \mathcal{K}^x_{B\to B_1\Bar{B}}(\sigma^x_{A_1 B E}), \frac{\mathds{1}_X}{|{X}|}\otimes \Phi_{A_1 B_1}\otimes \tau_{\Bar{B}E}\bigg)\notag\\
 & \overset{(d)}{\leq} F\left(\Lambda_{B X \to B_1}\left(\sigma^{U}_{A_1 B X_A}\right), \Phi_{A_1 B_1}\right),\label{eq:distill_2}
\end{align}
where $\tau_{\Bar{B}E}$ is a purification of $\rho_E$, and $\Lambda_{B X\to B_1}(\cdot) = \Tr_{\Bar{B}X}\left[\sum_{x}\ket{x}\bra{x}_{X}\otimes \mathcal{K}^x_{B\to B_1 \Bar{B}}(\cdot)\right]$.
In (a) and (c) we use the fact that for c-q state $\omega_{XB} = \sum_x p(x)\ket{x}\bra{x}_{X}\otimes \omega_x$ and $\tau_{XB} = \sum_x q(x)\ket{x}\bra{x}_{X}\otimes \tau_x$, $F(\omega_{XB}, \tau_{XB}) = \sum_x \sqrt{p(x)q(x)}F(\omega_x,\tau_x)$. Equation $(b)$ comes from Uhlmann's theorem: there exists an isometry map $\mathcal{K}^x_{B\to B_1 \Bar{B}}$ such that equality $(b)$ is observed. The inequality (d) comes from the monotonicity of fidelity after partial trace.
Moreover, 
\begin{align}
F\left(\sigma^{U}_{A_1 E X_A}, \frac{\mathds{1}_X}{|{X}|}\otimes \frac{\mathds{1}_{A_1}}{|{A}_1|}\otimes \rho_{E}\right) \geq 1- \frac{1}{2}\left\| \mathcal{T}_{A\to A_1 X}(U_A\rho_{AE}U_A^{\dagger})-\frac{\mathds{1}_X}{|{X}|}\otimes \frac{\mathds{1}_{A_1}}{|{A}_1|}\otimes \rho_{E} \right\|_1.\label{eq:distill_3}
\end{align}
Combining \eqref{eq:distill_1}, \eqref{eq:distill_2}, and \eqref{eq:distill_3}, we have 
\begin{align}
&F\left(\Lambda_{B X \to B_1}\left(\mathcal{T}_{A\to A_1 X_A}(U_A\rho_{ABE}U_A^{\dagger})\right), \Phi_{A_1 B_1}\right) \\
&\geq 1- \frac{1}{2} \left\| \mathcal{T}_{A\to A_1 X_A}(U_A\rho_{AE}U_A^{\dagger})-\frac{\mathds{1}_X}{|{X}|}\otimes \frac{\mathds{1}_{A_1}}{|{A}_1|}\otimes \rho_{E} \right\|_1\notag\\
& \geq 1- \tr\left[\rho_{AE}\left\{ \mathscr{P}_{\rho_E}[\rho_{AE}] > c \mathds{1}_A\otimes \rho_E \right\}\right] + \sqrt{ c|\textnormal{\texttt{spec}}(\rho_E)|{A}_1| }.\notag
\end{align}

Let $c = \exp\left\{D_\text{s}^{1-\varepsilon+ \delta}(\mathscr{P}_{\rho_E}[\rho_{AE}]  \,\|\,\mathds{1}_A\otimes\rho_E) + \xi\right\}$ for some small $\xi >0$ and $0<\delta<\eps$. 
Following similar steps in the proof of decoupling achievability bound in Section~\ref{sec:lower}, 
we have 
\begin{align}
\log |{A}_1| 
\geq H_\text{h}^{1-\varepsilon + 3\delta}(A{\,|\,}E)_{\rho} - \xi -\log \frac{|\texttt{spec}(\rho_E)|^2}{\delta^4}. \label{eq:distillbd}
\end{align}

We can now apply the 1-LOCC operation $\Pi_{1-\text{LOCC}}(\cdot) = \Lambda(\mathcal{T}(U_A(\cdot)U_A^{\dagger}))$ by the following steps:
For Alice and Bob holding $\rho_{AB}$, Alice applies the unitary map $U_A(\cdot)$ followed by the map $T_{A\to A_1 X_A}$. She then transmit $x\in X_A$ to Bob classically so Bob receive $x\in X_B$. Bob then applies the decoding map $\Lambda_{BX_B\to B_1}$ to generate the shared maximally entangled state.
Since the maximally entangled state extracted is of dimension $|{A}_1|$, the proof is done by using \eqref{eq:distillbd} and the definition of maximal distillable entanglement $E_{\textnormal{ED}}^{\varepsilon}(\rho)$. The bound of $E_{\textnormal{ED}}^{\varepsilon}(\rho^{\otimes n})$ follows from the above one-shot bound and Lemma~\ref{lemm:second}.
\end{proof}

\section{Conclusions} \label{sec:conclusion}
 We establish a tight one-shot characterization of maximal remainder dimension for quantum decoupling, with the trace distance as the security criterion. Our characterization, utilizing the conditional hypothesis testing entropy, is obtained via direct analysis of trace distance without using smoothed entropy and relating to the purified distance. Most importantly, our one-shot bound leads to exact second-order rate in i.i.d.~asymptotic scenario, a problem that has been open for a decade. 
 Moreover, our result has an application in entanglement distillation. Applying the lower bound of $\ell^{\varepsilon}(A{\,|\,}E)_{\rho}$, we found an achievability bound of the size of the maximally entangled state Alice and Bob can produce via $1$-LOCC operation. 

 We shall remark that most entanglement distillation achievability proofs via quantum decoupling are through Uhlmann's theorem.
 A straightforward application of the Fuchs---van de Graaf for translating trace distance to fidelity will incur a square-root loose on the error bound.
 It is not clear yet such translation can be sharpened or there is any way to bypass Uhlmann's theorem when applying quantum decoupling.
 If it is the case, then the established one-shot decoupling bound on $\ell^{\varepsilon}(A{\,|\,}E)_{\rho}$ could lead to tighter characterizations for entanglement distillation.
We leave this interesting question for future work. 

\section*{Acknowledgement}
H.-C.~Cheng would like to thank Kai-Min Chung for his insightful discussions, and also thank Marco Tomamichel and Mario Berta for helpful comments and pointing out relevant references.
H.-C.~Cheng is supported by the Young Scholar Fellowship (Einstein Program) of the National Science and Technology Council, Taiwan (R.O.C.) under Grants No.~NSTC 112-2636-E-002-009, No.~NSTC 113-2119-M-007-006, No.~NSTC 113-2119-M-001-006, No.~NSTC 113-2124-M-002-003, by the Yushan Young Scholar Program of the Ministry of Education, Taiwan (R.O.C.) under Grants No.~NTU-112V1904-4 and by the research project ``Pioneering Research in Forefront Quantum Computing, Learning and Engineering'' of National Taiwan University under Grant No. NTU-CC- 112L893405 and NTU-CC-113L891605. H.-C.~Cheng acknowledges the support from the “Center for Advanced Computing and Imaging in Biomedicine (NTU-113L900702)” through The Featured Areas Research Center Program within the framework of the Higher Education Sprout Project by the Ministry of Education (MOE) in Taiwan.
LG's research was partially supported by NSF Grant DMS-2154903.

{\larger
\bibliographystyle{myIEEEtran}
\bibliography{reference.bib}
}

\end{document}